\newtheorem{claim}{Claim}
\newcommand{\abs}[1]{ \left| #1 \right|}
\newcommand{\cset}[2]{\left\{ #1  \, : \, #2\right\}}
\newcommand{\set}[1]{\left\{ #1 \right\}}
\newcommand{\eset}{{\mathcal E}}
\newcommand{\lset}{{\mathcal L}}
\renewcommand{\S}{{\mathcal S}}
\newcommand{\V}{{\mathcal V}}
\newcommand{\E}{{\mathcal E}}
\newcommand{\X}{{\mathcal X}}
\newcommand{\C}{\text{\textsc{Cong}}}
\renewcommand{\setminus}{-}
\title{New Approximability Results for the Robust $k$-Median Problem}
\author{Sayan Bhattacharya}
\author{Parinya Chalermsook}
\author{Kurt Mehlhorn}
\author{Adrian Neumann}
\affil{Max-Planck Institut für Informatik\\
Campus E1 4, \\
66123 Saarbrücken, Germany\\
\texttt{\{bsayan,parinya,mehlhorn,aneumann\}@mpi-inf.mpg.de}}
\authorrunning{S.\ Bhattacharya, P.\ Chalermsook, K.\ Mehlhorn, and A.\ Neumann }
\subjclass{F.2.2 Nonnumerical Algorithms and Problems}
\keywords{Hardness of Approximation, Heuristics}
\def\ShowComment{True}
\def\parinya#1{\marginpar{$\leftarrow$\fbox{P}}\footnote{$\Rightarrow$~{\sf #1 --Parinya}}}
\def\parinya#1{}
\begin{document}

\maketitle

\begin{abstract}
We consider a robust variant of the classical $k$-median problem, introduced by Anthony et al.~\cite{AnthonyGGN10}.
In the \emph{Robust $k$-Median problem}, we are given an $n$-vertex metric space $(V,d)$ and $m$ client sets $\set{S_i \subseteq V}_{i=1}^m$. The objective is to open a set $F \subseteq V$ of $k$ facilities such that the worst case connection cost over all client sets is minimized; in other words, minimize $\max_{i} \sum_{v \in S_i} d(F,v)$. Anthony et al.\ showed an $O(\log m)$ approximation algorithm for any metric and APX-hardness even in the case of uniform metric. In this paper, we show that their algorithm is nearly tight by providing $\Omega(\log m/ \log \log m)$ approximation hardness, unless ${\sf NP} \subseteq \bigcap_{\delta >0} {\sf DTIME}(2^{n^{\delta}})$. This hardness result holds even for uniform and line metrics. To  our knowledge, this is one of the rare cases in which a problem on a line metric is hard to approximate to within logarithmic factor. We complement the hardness result by an experimental evaluation of different heuristics that shows that very simple heuristics achieve good approximations for realistic classes of instances.
\end{abstract}

\section{Introduction}

In the classical $k$-median problem, we are given a set of  clients  located on a metric space with distance function $d : V \times V \rightarrow \mathbb{R}$. The goal is  to open a set of facilities $F \subseteq V$,  $|F| = k$, so as  to minimize the sum of the connection costs of the clients in $V$, i.e., their  distances from their  nearest facilities in $F$.  This is a central problem in approximation algorithms, and quite naturally, it has received a large amount of attention in the past two decades~\cite{CharikarG99,AryaGKMMP04,CharikarGTS02,LinV92,LiS13}. 
 
At SODA 2008 Anthony et al.~\cite{anthony2008plant,AnthonyGGN10} introduced a generalization of the $k$-median problem. In their setting, the set of clients that are to be  connected to some facility is not known in advance, and the goal is to perform well in spite of this uncertainty about the future. 
In particular, they formulated the problem as follows.

\begin{definition}[Robust $k$-Median]
\label{def:mainproblem}
An instance of this problem is a triple $(V, \S, d)$. This defines a set of \emph{locations} $V$, a collection of $m$ sets of \emph{clients} $\S = \{S_1,\ldots , S_m\}$, where $S_i \subseteq V$ for  all $i \in \{1,\ldots, m\}$, and a metric distance function $d: V \times V \rightarrow \mathbb{R}$. We have to open a set of $k$ facilities $F \subseteq V$, $|F| = k$, and the goal is to minimize the cost of the most expensive set of clients, i.e.\ minimize $\max_{i=1}^m \sum_{v \in S_i} d(v, F )$. Here, $d(v,F)$ denotes the minimum distance of the client $v$ from any location in $F$, i.e.\ $d(v,F) = \min_{u \in F} d(u,v)$.
\end{definition}

Note that the Robust $k$-Median problem is a natural generalization of the classical $k$-median problem (where $m=1$). In addition, we can think of  this formulation as capturing some notion of {\em fairness}. To see this, simply  interpret each set $S_i$ as  a {\em community} of clients who would pay $\sum_{v \in S_i} d(v,F)$ for getting connected to some facility. Now the objective ensures that no single community pays too much, while minimizing the cost.  
Anthony et al.~\cite{AnthonyGGN10} gave an $O(\log m)$-approximation algorithm for this problem, and a lower bound of $(2-\epsilon)$ for the best possible approximation ratio by a reduction from Vertex Cover.


\paragraph*{Our Results}
We give  nearly tight hardness of approximation results for the Robust $k$-Median problem. 
We show that  unless ${\sf NP} \subseteq \cap_{\delta >0} {\sf DTIME}(2^{n^{\delta}})$, the problem admits  no poly-time $o(\log m/\log \log m)$-approximation, {\em even on uniform and line metrics}.

Our first hardness result  is tight up to a constant factor, as a simple  rounding scheme  gives a matching upper bound on uniform metrics (see Section~\ref{sec:gap}). Our second, and rather surprising, result shows  that  ``Robust $k$-Median''  is a rare problem with  super-constant hardness of approximation even on line metrics, in sharp contrast to most other geometric optimization problems which admit polynomial time approximation schemes, e.g.~\cite{arora1996polynomial, kolliopoulos1999nearly}. 

In Section~\ref{sec:experiments} we investigate the performance of some heuristics. Already a very simple greedy strategy provides reasonably good performance on a realistic class of instances. We use an LP relaxation of the problem as a lower bound.

\paragraph*{Our Techniques} 
First, we note that the Robust $k$-Median problem on uniform metrics is equivalent to the following variant of the set cover problem: Given a set $U$ of ground elements, a collection of sets $\X=\set{X \subseteq U}$, and an integer $t \leq |\X|$, our goal is to select $t$ sets from $\X$ in order to minimize the number of times an element from $U$ is hit (see Lemma~\ref{lm:equivalence}). We call this problem Minimum Congestion Set Packing (MCSP). 
This characterization allows us to focus on proving the hardness of MCSP, and to employ the tools developed for the set cover problem. 

We now revisit the reduction used in proving the hardness of the set cover problem by Feige~\cite{feige}, building on the framework of Lund and Yannakakis~\cite{LundY94}, and discuss how our approach differs from theirs. Intuitively, they compose the Label Cover instance with a set system that has some desirable properties. Informally speaking, in the Label Cover problem, we are given a graph  where each vertex $v$ can be assigned a label from a set $L$, and each edge $e$ is equipped with a constraint $\Pi_e \subseteq L \times L$ specifying the accepting pairs of labels for $e$. Our goal is to find a labeling of vertices that maximizes the number of accepting edges. This problem is known to be hard to approximate to within a factor of $2^{\log^{1-\epsilon} |E|}$~\cite{AroraLMSS98,Raz98}, where $|E|$ is the number of edges. Thus, if we manage to reduce Label Cover to MCSP, we would hopefully obtain a large hardness of approximation factor for MCSP as well.

From the Label Cover instance, \cite{LundY94} creates an instance of Set Cover by having sets of the form $S(v, \ell)$ for each vertex $v$ and each label $\ell \in L$. 
Intuitively the set $S(v,\ell)$ means choosing label $\ell$ for vertex $v$ in the label cover instance.
Now, if we assume that the solution is well behaved, in the sense that for each vertex $v$, only one set of the form $S(v,\ell)$ is chosen in the solution, we would be immediately done (because each set indeed corresponds to label assignment). 
However, a solution need not have this form, e.g. choosing sets $S(v, \ell)$ and $S(v, \ell')$ would translate to having two labels $\ell, \ell'$ for the label cover instance. 
To prevent an ill-behaved solution, ``partition systems'' were introduced and used in both~\cite{LundY94} and~\cite{feige}. 
Feige considers the hypergraph version of Label Cover to obtain a sharper hardness result of $\ln n - O(\ln \ln n)$ instead of $\frac{1}{4} \ln n$ in~\cite{LundY94}; here $n$ denotes the size of the universe.
 
Now we highlight how our reduction is different from theirs. The high level idea of our reduction is the same, i.e. we have sets of the form $S(v,\ell)$ that represent assigning label $\ell$ to vertex $v$. 
However, we need a different partition system and a totally different analysis.
Moreover, while a reduction from standard Label Cover gives nearly tight $O(\log n)$ hardness for Set Cover, it can (at best) only give the hardness of $2-\epsilon$ for MCSP. 
To prove our results, we do need a reduction from the Hypergraph Label Cover problem. 
This suggests another natural distinction between MCSP and Set Cover.   

Finally, to obtain the hardness of the Robust $k$-Median problem on the line metric, we embed the instance created from the MCSP reduction onto the line such that the values of optimal solutions are preserved. This way we get the same hardness gap for line metrics.




\section{Preliminaries}

We will show that the Robust $k$-Median problem is $\Omega(\log m / \log \log m)$ hard to approximate, even for the special cases of \emph{uniform metrics} (see Section~\ref{subsec:hard:uniform}) and \emph{line metrics} (see Section~\ref{subsec:hard:line}). Recall that $d$ is a uniform metric iff we have $d(u,v) \in \{0,1\}$ for all locations $u,v \in V$. Further, $d$ is a line metric iff the locations in $V$ can be embedded into a line in such a way that $d(u,v)$ equals the Euclidean distance between $u$ and $v$, for all $u,v \in V$.  Throughout this paper, we will denote any set  of the form $\{1,2,\ldots, i\}$ by $[i]$.    Our hardness results will rely on a reduction from the \emph{$r$-Hypergraph Label Cover} problem, which is defined as follows. 

\begin{definition}[$r$-Hypergraph Label Cover]
\label{def:labelcover}
An instance of this problem is a triple $(G,\pi,r)$, where $G = (\mathcal{V},\mathcal{E})$ is a $r$-partite hypergraph with vertex set $\V = \bigcup_{j=1}^r \V_j$ and edge set $\E$. Each edge $h \in \mathcal{E}$ contains one vertex from each part  of $\V$, i.e.\ $|h \cap \V_j| = 1$ for all $j \in [r]$. Every set $\V_j$ has an associated set of \emph{labels} $L_j$.  Further, for all $h \in \mathcal{E}$ and $j \in [r]$, there is a mapping $\pi_h^j : L_j \rightarrow C$ that projects the labels from $L_j$ to a common set of \emph{colors} $C$. 

The problem is to assign to every vertex $v \in \V_j$ some label $\sigma(v) \in L_j$. We say that an edge $h = (v_1,\ldots,v_r)$, where $v_j \in \V_j$ for all $j \in [r]$, is \emph{strongly satisfied} under $\sigma$ iff the labels of all its vertices are mapped to the same element in $C$, i.e.   $\pi_h^j(\sigma(v_j)) = \pi_h^{j'} (\sigma(v_{j'}))$ for all $j, j' \in [r]$. In contrast, we say that the edge is \emph{weakly satisfied} 
iff there exists some pair of vertices in $h$ whose labels are mapped to the same element in $C$, i.e.\ $\pi^j_h(\sigma(v_j)) = \pi^{j'}_h(\sigma(v_{j'}))$ for some $j, j' \in [r]$, $j \neq j'$. 
\end{definition}

For ease of exposition, we will often abuse the notation and denote by $j(v)$ the part of $\V$ to which a vertex $v$ belongs, i.e. if $v \in \V_{j}$ for some $j \in [r]$, then we set $j(v) \leftarrow j$. The next theorem will be crucial in deriving our hardness result. The proof of this theorem follows from Feige's $r$-Prover system~\cite{feige} (see Appendix~\ref{sec:labelcover}).

\begin{theorem}
\label{th:labelcover}
Let $r \in \mathbb{N}$ be a parameter. There is a polynomial time reduction from $n$-variable 3-SAT to 
 $r$-Hypergraph Label Cover with the following properties: 
\begin{itemize}
\item (Yes-Instance) If the formula is satisfiable, then there is a labeling that strongly satisfies every edge in $G$. 
\item (No-Instance) If the formula is not satisfiable, then every labeling weakly satisfies at most a $2^{-\gamma r}$ fraction of the edges in $G$, for some universal constant $\gamma$. 
\item The number of vertices in the graph is $|\V| = n^{O(r)}$ and the number of edges is $|\mathcal{E}| = n^{O(r)}$. The sizes of the label sets are $\abs{L_j}= 2^{O(r)}$ for all $j \in [r]$, and $\abs{C} = 2^{O(r)}$. Further, we have $|\V_j| = |\V_{j'}|$ for all $j, j' \in [r]$, and each vertex $v \in \V$ has the same degree $r |\mathcal{E}|/|\V|$.
\end{itemize}
\end{theorem}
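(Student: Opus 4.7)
The plan is to derive the theorem from Feige's $r$-prover interactive proof system~\cite{feige}, composed with the PCP theorem. First, apply the PCP theorem to reduce $n$-variable 3-SAT to a constant-gap Max-3SAT-5 instance $\varphi$ in which every variable occurs in exactly five clauses and the gap between satisfiable and unsatisfiable cases is $(1, 1-\epsilon_0)$ for some absolute constant $\epsilon_0 > 0$. The resulting instance has size polynomial in $n$.

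Second, instantiate Feige's $r$-prover protocol on $\varphi$: the verifier picks a uniformly random clause $c$ and, independently for each of the $r$ provers, a uniformly random ``distinguished'' variable $x_j$ of $c$; it sends prover $j$ the pair $(c, x_j)$, and each prover returns a Boolean assignment to the three variables of $c$. The verifier accepts iff all returned assignments satisfy $c$ and all provers agree on their distinguished variables. Feige establishes two properties: (i) if $\varphi$ is satisfiable, the honest strategy in which each prover answers according to a fixed satisfying assignment is accepted with probability $1$; and (ii) if $\varphi$ is unsatisfiable, then for every prover strategy the probability that even a single pair of provers returns assignments agreeing on their respective distinguished variables is at most $2^{-\gamma r}$ for a universal constant $\gamma > 0$.

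Third, turn this protocol into the claimed $r$-Hypergraph Label Cover instance in the canonical way. Put one vertex in $\V_j$ for each possible query to prover $j$, one label in $L_j$ for each possible answer of prover $j$, one hyperedge $h \in \E$ for each random string of the verifier (containing the unique vertex queried from each prover on that string), and take the color set $C$ to be the set of possible assignments to the distinguished variable. The projection $\pi^j_h$ extracts, from an answer of prover $j$ on the query of hyperedge $h$, the bit it assigns to its distinguished variable. Then labelings correspond bijectively to deterministic prover strategies, strongly satisfied hyperedges to random strings on which all provers agree on the distinguished variable, and weakly satisfied hyperedges to strings on which some pair of provers agree. Completeness and soundness now follow immediately from properties (i) and (ii).

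The parameter bounds are obtained by direct counting from Feige's construction: the number of random strings is $|\E|=n^{O(r)}$, the number of possible queries per prover is $n^{O(r)}$ and the number of possible answers per prover is $2^{O(r)}$, yielding $|\V|=n^{O(r)}$, $|L_j|=2^{O(r)}$, and $|C|=2^{O(r)}$; symmetry of the protocol across provers yields $|\V_j|=|\V_{j'}|$ and the uniform vertex degree $r|\E|/|\V|$. The main technical obstacle is the soundness statement (ii): a standard $r$-prover or parallel-repetition analysis only controls the probability that all $r$ provers simultaneously succeed, whereas we require the stronger guarantee that even some single pair of provers rarely agrees. This pairwise soundness is exactly the property Feige's protocol is engineered to provide, via symmetrization over which prover receives which distinguished variable, and it is precisely what justifies stating the theorem in terms of \emph{weak} (rather than strong) satisfaction.
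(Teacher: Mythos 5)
Your overall plan---instantiate Feige's $r$-prover system on a gap 3SAT-5 instance and read it off as an $r$-Hypergraph Label Cover instance, with prover strategies corresponding to labelings and pairwise consistency corresponding to weak satisfaction---is the same route the paper takes in Appendix~\ref{sec:labelcover}. But the protocol you actually write down is not Feige's, and it does not have the soundness you attribute to it. You let the verifier pick a \emph{single} clause $c$ and send \emph{every} prover the pair $(c,x_j)$. Since each prover sees the entire clause, all provers can simply answer with one canonical satisfying assignment of $c$ (every 3-clause has seven); then all answers satisfy $c$ and all provers agree on every variable of $c$, so the verifier accepts with probability $1$ even when the formula is unsatisfiable. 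Your property (ii) is therefore false for the protocol as described. Moreover, even a corrected single-round game (where some provers see only the distinguished variable and not the clause) can only have constant soundness error; an exponentially small bound of the form $2^{-\gamma r}$ on the probability that \emph{some} pair of provers is consistent cannot come from one round---it requires repetition.

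What Feige's protocol does, and what the paper's proof relies on, is the following: the verifier picks $r$ clauses $C_1,\ldots,C_r$ independently at random and from each clause a random variable $x_j$; each prover $P_i$ is assigned a length-$r$ codeword with pairwise Hamming distance at least $r/2$ (Hadamard code), and in coordinate $j$ it receives either the clause $C_j$ or only the variable $x_j$ depending on the $j$-th bit of its codeword. Thus each prover receives $r/2$ clauses and $r/2$ variables, and any two provers differ in their query type on at least $r/2$ coordinates; the bound that no strategy makes even one pair of provers consistent with probability more than $r^2 2^{-\gamma r}$ (which becomes $2^{-\gamma' r}$ after adjusting the constant) is Feige's Lemma 2.3.1~\cite{feige}, proved via the parallel repetition theorem~\cite{Raz98} applied to the underlying two-prover game, not via a ``symmetrization'' of a single-clause game. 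Note also that the parameters you quote ($n^{O(r)}$ queries per prover, $2^{O(r)}$ answers, hence $|\V|=n^{O(r)}$, $|L_j|=2^{O(r)}$) are those of this $r$-coordinate protocol; the protocol you described has only $\operatorname{poly}(n)$ possible queries and $7$ possible answers, so your counting is inconsistent with your own construction. The missing ingredient is precisely the code-based distribution of mixed clause/variable queries over $r$ repetitions.
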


We use a \emph{partition system} that is motivated by the hardness proof of the Set Cover problem~\cite{feige}. However, we deal with a different problem, and our construction is also different.

\begin{definition}[Partition System]
\label{def:partition} Let $r \in \mathbb{N}$ and let $C$ be any finite set. 
An $(r,C)$-partition system is a pair $(Z, \set{p_{c}}_{c \in C})$, where $Z$ is an arbitrary (ground) set, and for each $c \in C$, $p_{c}$ is a partition of $Z$ into $r$ subsets, such that the following properties hold. 

\begin{itemize} 
\item (Partition) For all $c \in C$, $p_c = \left(A_c^1,\ldots, A_c^r\right)$ is a partition of $Z$, that is $\bigcup_{j=1}^r A_c^j = Z$, and $A_c^{j'} \cap A_c^{j} = \emptyset$ for all $j, j' \in [r], j \neq j'$. 
\item ($r$-intersecting) For any $r$ \emph{distinct} indices $c_1,\ldots, c_r \in C$ and \emph{not-necessarily distinct} indices $j_1,\ldots, j_r \in [r]$, we have that $\bigcap_{i=1}^r A_{c_i}^{j_i} \neq \emptyset$. In particular, $A_c^j \not= \emptyset$ for all $c$ and $j$. 
\end{itemize} 
\end{definition}

In order to achieve a good lower bound on the approximation factor, we need partition systems with \emph{small} ground sets. The most obvious way to build a partition system is to form an $r$-hypercube: Let $Z= [r]^{|C|}$, and for each $c \in C$ and $j \in [r]$, let $A_c^j$ be the set of all elements in $Z$ whose $c$-th component is $j$. It can easily be verified that this is an $(r,C)$-partition system with $|Z| = r^{\abs{C}}$. With this construction, however, we would only get a hardness of $\Omega(\log \log m)$ for our problem. The following lemma shows that it is possible to construct an $(r,C)$-partition system probabilistically with $|Z| = r^{O(r)}\log \abs{C}$. 

\begin{lemma}
\label{lm:partition}
There is an $(r,C)$-partition system with $|Z| = r^{O(r)} \log \abs{C}$ elements. 
Further, such a partition system can be constructed efficiently with high probability. 
\end{lemma}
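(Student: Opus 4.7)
}
My plan is to use the probabilistic method with a completely random construction, and then argue by a union bound over all $r$-wise intersection patterns. Set $Z = [N]$ where $N = C_0 \cdot r^{r+1} \log |C|$ for a sufficiently large constant $C_0$. To build the partitions, for every $z \in Z$ and every $c \in C$, independently pick an index $\pi(z,c) \in [r]$ uniformly at random, and define $A_c^j = \{z \in Z : \pi(z,c) = j\}$. By construction, for each fixed $c$, the sets $A_c^1,\ldots,A_c^r$ form a partition of $Z$, so the (Partition) property holds deterministically.

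Now I would verify the ($r$-intersecting) property in expectation and then amplify. Fix any $r$ distinct colors $c_1,\ldots,c_r \in C$ and any (not necessarily distinct) indices $j_1,\ldots,j_r \in [r]$. Because the random choices $\pi(z,c_1),\ldots,\pi(z,c_r)$ are mutually independent across distinct colors, the probability that a given element $z$ lies in $\bigcap_{i=1}^r A_{c_i}^{j_i}$ is exactly $r^{-r}$. Summing over the $N$ independent elements, the probability that the intersection is empty is at most $(1 - r^{-r})^N \le e^{-N/r^r}$.

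The next step is to take a union bound over all choices of $(c_1,\ldots,c_r)$ and $(j_1,\ldots,j_r)$; there are at most $|C|^r \cdot r^r$ such tuples. Hence the probability that some intersection is empty is at most $|C|^r \cdot r^r \cdot e^{-N/r^r}$. Plugging in $N = C_0 \cdot r^{r+1}\log |C|$ and choosing $C_0$ large enough makes this bound at most $1/2$ (and in fact $1/\mathrm{poly}(|C|)$ if we inflate $C_0$ by a constant factor, giving a high-probability guarantee and hence efficient Las-Vegas construction by repetition). Since $N = r^{O(r)} \log|C|$, this establishes both the existence and the efficient randomized construction claimed in the lemma.

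The only mild obstacle is making sure the union bound is over the right family of events: one has to take all ordered $r$-tuples of distinct colors together with arbitrary (possibly repeating) side indices $j_i$, which is why the counting factor is $|C|^r r^r$ rather than something smaller. Once the event space is set up correctly, the calculation is routine, and the $r^r$ blowup in the size of $Z$ is exactly what the independence of the $r$ random choices buys us.
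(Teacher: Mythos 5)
Your proposal is correct and follows essentially the same route as the paper's own proof: the identical random construction (each element independently assigned a uniform index in $[r]$ for every color), the same observation that independence across the $r$ distinct colors gives success probability $r^{-r}$ per element, and the same union bound over color/index tuples. The only cosmetic difference is your tighter choice $|Z| = O(r^{r+1}\log|C|)$ versus the paper's more generous $r^{O(r)}\log|C|$; your constant-$C_0$ calculation needs the harmless remark that one may assume $|C| \ge r$ (otherwise the $r$-intersecting condition is vacuous), or else a slightly larger exponent, either of which stays within the claimed bound.
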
 

\begin{proof}
Let $Z$ be any set of $r^{O(r)} \log \abs{C}$ elements. 
We build a partition system $(Z,\{p_c\}_{c \in C})$ as described in Algorithm~\ref{alg:partition}.

\begin{algorithm}[h]
\DontPrintSemicolon
\SetAlgoNoEnd
\caption{A randomized algorithm for constructing an $(r,C)$-partition system.}\label{alg:partition}
\SetKwInOut{Input}{input}\SetKwInOut{Output}{output}
\Input{A ground set $Z$, a parameters $r \in \mathbb{N}$, and a set $C$.}
\ForEach{$c \in C$}{\tcc{Construct the partition $p_c=(A_c^1,\ldots,A_c^r)$}
    Initialize $A_c^j$ to the empty set for all $j \in [r]$\;
  \ForEach{ground element $e\in Z$}{
    Pick an index $j\in[r]$ independently and uniformly at random and add $e$ to $A_c^j$
  }
}
\end{algorithm}

In Algorithm~\ref{alg:partition}, by construction each $p_c$ is a partition of $Z$, i.e.\ the first property stated in Definition~\ref{def:partition} is satisfied. We bound the probability that the second property is violated.

Fix any choice of $r$ \emph{distinct} indices $c_1, \ldots, c_r \in C$ and \emph{not necessarily distinct} indices $j_1, \ldots, j_r \in [r]$. We say that a \emph{bad event} occurs when the intersection of the corresponding sets is empty, i.e.\ $\bigcap_{i=1}^r A_{c_i}^{j_i} = \emptyset$. To upper bound the probability of a bad event, we focus on events of the form $E_{e, i}$ -- this occurs when an element $e \in Z$ is included in a set $A_{c_i}^{j_i}$. Since the indices $c_1 \ldots c_r$ are distinct, it follows that the events $\{E_{e,i}\}$ are mutually independent. Furthermore, note that we have $\Pr[E_{e,i}] = 1/r$ for all $e \in Z, i \in [r]$. Hence, the probability that an element $e \in Z$ does not belong to the intersection $\bigcap_{i=1}^r A_{c_i}^{j_i}$ is given by $1 - \Pr[\bigcap_{i=1}^r E_{e,i}] = 1-1/r^r$. Accordingly, the probability that no element $e \in Z$ belongs to the intersection, which defines the bad event, is equal to $(1-1/r^r)^{|Z|}$.

Now, the total number of choices for $r$ distinct indices $c_1, \ldots, c_r$ and $r$ not-necessarily distinct indices $j_1, \ldots, j_r$ is equal to $\binom{\abs{C}}{r} \cdot r^r$. Hence, taking a union-bound over all possible bad events, we see that the second property stated in Definition~\ref{def:partition} is violated with probability at most $\binom{\abs{C}}{r} \cdot r^r \cdot (1-r^r)^{|Z|} \leq (\abs{C}r)^r \cdot \exp(-|Z|/r^r)$. If we set $|Z| = d \cdot r^{d \cdot r} \log \abs{C}$ with sufficiently large constant $d$, then it is easy to see that the second constraint in Definition~\ref{def:partition} is satisfied with high probability.
\end{proof}




\section{Hardness of Robust k-Median on Uniform Metrics}
\label{subsec:hard:uniform}

First, we define a problem called \emph{Minimum Congestion Set Packing} (MCSP), and then show a reduction from MCSP to Robust $k$-Median on uniform metrics. In Section~\ref{subsec:reduction}, we will then show that MCSP is hard to approximate by reducing Hypergraph Label Cover to MCSP. 

\begin{definition}
\label{def:newproblem}
[Minimum Congestion Set Packing (MCSP)] An instance of this problem is a triple $(U, \X, t)$, where $U$ is a universe of $m$ elements, i.e.\ $|U| = m$, $\X$ is a collection of sets $\X=\set{X \subseteq U}$ such that $\bigcup_{X \in \X} X = U$, and $t \in \mathbb{N}$ and $t \le \abs{\X}$. The objective is to find a collection $\X' \subseteq \X$ of size $t$ that minimizes $\C(\X') = \max_{e\in U} \C(e, \X')$. Here, $\C(\X')$ refers to the \emph{congestion} of the solution $\X'$, and $\C(e,\X')= |\{ X \in \X' : e \in X\}|$ is the congestion of the element $e \in U$ under the solution $\X'$.
\end{definition}


\begin{lemma}
\label{lm:equivalence}
Given any MCSP instance $(U,\X,t)$, we can construct a Robust $k$-Median instance $(V,\S,d)$ with the same objective value in $poly(|U|, |\X|)$ time, such that $|U| = |\S|$, $|\X| = |V|$, $d$ is a uniform metric, and $k = |V| - t$.
\end{lemma}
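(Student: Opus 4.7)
My plan is to set up the most natural correspondence: identify the locations of the $k$-Median instance with the sets of the MCSP instance, and the client sets with the elements of the universe, then observe that under a uniform metric the connection cost counts precisely the elements left uncovered, which matches the MCSP congestion after taking complements.

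Concretely, I would define $V := \X$ (so $|V| = |\X|$) and, for each element $e \in U$, introduce a client set $S_e := \{X \in \X : e \in X\} \subseteq V$, giving $\S = \{S_e\}_{e \in U}$ with $|\S| = |U|$. The metric $d$ is taken to be the uniform metric on $V$, and I set $k := |V| - t$. This construction is clearly computable in $\operatorname{poly}(|U|, |\X|)$ time, and the assumption $\bigcup_{X \in \X} X = U$ ensures each $S_e$ is nonempty so the instance is well-defined.

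Next I would verify the objective values match. For any $F \subseteq V$ with $|F| = k$, under the uniform metric one has $d(v, F) = 0$ if $v \in F$ and $d(v, F) = 1$ otherwise, so $\sum_{v \in S_e} d(v, F) = |S_e \setminus F|$. Setting $\X' := V \setminus F = \X \setminus F$, we have $|\X'| = |V| - k = t$, and by definition of $S_e$
\[
|S_e \setminus F| = |\{X \in \X \setminus F : e \in X\}| = \C(e, \X').
\]
Taking the maximum over $e \in U$ yields $\max_{e \in U} \sum_{v \in S_e} d(v, F) = \C(\X')$. Hence the map $F \mapsto \X \setminus F$ is a bijection between feasible $k$-Median solutions and feasible MCSP solutions that preserves the objective, so in particular the optimal values agree.

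There is no serious obstacle here; the lemma is essentially a bookkeeping exercise. The only point that requires a moment of care is the direction of the complementation: one must set $k = |V| - t$ (not $k = t$) so that choosing which $k$ facilities to \emph{open} in the $k$-Median instance corresponds to choosing which $t = |\X| - k$ sets to \emph{keep} in the MCSP instance, with the uncovered clients of a given $S_e$ corresponding to the kept sets that contain $e$.
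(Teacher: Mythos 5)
Your proposal is correct and matches the paper's proof essentially line for line: the same construction (locations from sets, client sets $S_e$ from elements, uniform metric, $k = |V| - t$) and the same objective-preserving bijection $F \mapsto \X \setminus F$. No differences worth noting.
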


\begin{proof}
We construct the Robust $k$-Median instance $(V,\S,d)$ as follows. For every $e \in U$ we create a set of clients $S(e)$, and for each $X \in \X$ we create a location $v(X)$. Thus, we get $V = \{ v(X) \, : \, X \in \X\}$, and $\S = \{ S(e) \, : \, e \in U\}$. We place the clients in $S(e)$ at the locations of the sets that contain $e$, i.e.\ $S(e) = \{ v(X) \, : \, X \in \X, e \in X\}$ for all $e \in U$. The distance is defined as $d(u,v) = 1$ for all $u, v \in V, u \neq v$, and $d(v,v) = 0$. Finally, we set $k \leftarrow |V| - t$.

Now, it is easy to verify that the Robust $k$-Median instance $(V,\S,d)$ has a solution with objective $\rho$ iff the corresponding MCSP instance $(U,\X, t)$ has a solution with objective $\rho$. The intuition is that a location $v(X) \in V$ is \emph{not} included in the solution $F$ to the Robust $k$-Median instance iff the corresponding set $X$ is included in the solution $\X'$ to the MCSP instance. Indeed, let $F$ be any subset of $\X$ of size $k$ (= the set of open facilities) and let $\X' = \X \setminus F$. Further, let  $[X \in \X']$ be an indicator variable that is set to $1$ iff $X \in \X'$. Then
\begin{align*}
\C(\X') &= \max_{e \in U} \C(e,\X') = \max_{e \in U} \sum_{X; e \in X} [ X \in \X'] \\
&= \max_{e \in U} \sum_{X; e \in X} \min_{Y \in F} d(X,Y) = \max_{S(e) \in \S} \sum_{v(X) \in S(e)} d(v(X),F).
\end{align*}
\end{proof}

We devote the rest of Section~\ref{subsec:hard:uniform} to the MCSP problem and show that it is $\Omega(\log |U|/\log \log |U|)$ hard to approximate. This, in turn, will imply a $\Omega(\log |\S| / \log \log |\S|)$ hardness of approximation for Robust $k$-Median on uniform metrics. We will prove the hardness result via a reduction from Hypergraph Label Cover.



\subsection{Integrality Gap}
\label{sec:gap} 

Before proceeding to the hardness result, we show that a natural LP relaxation for the MCSP problem~\cite{AnthonyGGN10} has an integrality gap of $\Omega(\log m/ \log \log m)$, where $m = |U|$ is the size of the universe of elements. In the LP, we have a variable $y(X)$ indicating that the set $X \in \X$ is chosen, and a variable $z$ which represents the maximum congestion among the elements. 
\begin{align*}
  \min\quad & z\\
 \mbox{s.t.}\quad & \sum_{\mathclap{X \in \X: e \in X}} y(X) \leq z \mbox{ for all $e \in U$} \\
  & \sum_{\mathclap{X \in \X}} y(X) = t 
\end{align*} 

\noindent \textbf{The Instance:} Now, we construct a bad integrality gap instance $(U,\X,t)$. 
Let $d$ be the intended integrality gap, let $\eta = d^2$, and let $U = \cset{I}{I \subseteq [\eta], \abs{I} = d}$ be all subsets of $[\eta]$ of size $d$. The collection $\X$ consists of $\eta$ sets
$X_1,\ldots, X_{\eta}$, where $X_i = \cset{I}{I \in U \text{ and } i \in I}$. Note that the universe $U$ consists of $|U| = m = \binom{\eta}{d}$ elements, and each element $I$ is contained in exactly $d$ sets, namely $I \in X_i$ if and only if $i \in I$.
Finally, we set $t \leftarrow \eta/d$. 

\smallskip
\noindent \textbf{Analysis:}
The fractional solution simply assigns a value of $1/d$ to each variable $y(X_i)$; this ensures that the total (fractional) number of sets selected is $\eta/d = t$.
Furthermore, each element is contained (fractionally) in exactly one set, so the fractional solution has cost one. 
Any integral solution must choose $\eta/d = d$ sets, say $X_{i_1} \ldots X_{i_d}$. Then $I = \set{i_1,\ldots,i_d} \in X_{i_{\lambda}}$ for all $\lambda \in [d]$ and hence the congestion of $I$ is $d$, and this also means that any integral solution has cost at least $d$. 
Finally, since $|U| = m \leq \eta^d \leq (d^2)^d$, we have $d = \Omega(\log m/ \log \log m)$. 



\smallskip
\noindent \textbf{Tightness of the result:}
The bound on the hardness and integrality gap is tight for the uniform metric case, as there is a simple $O(\log m/ \log \log m)$-approximation algorithm. Pick each set $X$ with probability equal to $\min(1,2y(X))$. The expected congestion is $2z$ for each element. By Chernoff's bound~\cite{Hagerup-Rueb}, an element is covered by no more than $z \cdot O(\log m/ \log \log m)$ sets with high probability. A similar algorithm gives the same approximation guarantee for the Robust $k$-Median problem on uniform metrics.

\subsection{Reduction from r-Hypergraph Label Cover to Minimum Congestion Set Packing} 
\label{subsec:reduction}
The input is an instance $\left(G,\pi,r\right) $ of the $r$-Hypergraph Label Cover problem (see Definition~\ref{def:labelcover}). From this we construct the following instance  $(U, \X, t)$ of the MCSP problem (see Definition~\ref{def:newproblem}).
\begin{itemize} 
\item First, we define the universe $U$ as a union of disjoint sets. For each edge $h \in \E$ in the hypergraph we have a set $U_h$. All  these sets have the same size $m^*$ and are pairwise disjoint, i.e.\  $U_h \cap U_{h'} = \emptyset$ for all $h, h' \in \eset$, $h' \neq h$. The universe $U$ is then the union of these sets $U = \bigcup_{h \in \E} U_h$. Since the $U_h$ are mutually disjoint, we have $m=|U|=|\eset|\cdot m^*$. 
Recall that $C$ is the target set of $\pi$. Each set $U_h$ is the ground set of an $(r,C)$-partition system (see Definition~\ref{def:partition}) as given by Lemma~\ref{lm:partition}. In particular we have  $m^* = r^{O(r)} \log \abs{C}$. We denote the $r$-partitions associated with  $U_h$ by $\{p_c(h)\}_{c \in C}$, where $p_c(h) = \left(A^1_c(h), \ldots, A^r_c(h)\right)$.
\item Second, we construct the collection of sets $\X$ as follows. For each $j \in [r]$, $v \in \V_j$ and $\ell \in L_j$, $\X$ contains the set $X(v,\ell)$, where $X(v,\ell) = \bigcup_{h: v \in h} A^j_{\pi^j_h(\ell)}(h)$. In other words, $X(v,\ell) \cap U_h$ is empty if $v \not\in h$ and is equal to $A^j_{\pi^j_h(\ell)}(h)$ if $v \in h$. 
Intuitively, choosing the set $X(v,\ell)$ corresponds to assigning label $\ell$ to the vertex $v$. 
\item Third, we define $t \leftarrow |\V|$. Intuitively, this means that each vertex in $\V$ gets one label.
\end{itemize} 

 We assume for the sequel that the $r$-Hypergraph Label Cover instance is chosen according to Theorem~\ref{th:labelcover}. We assume that the parameter $r$ satisfies $r^7 2^{-\gamma r} < 1$. In the proof of the main theorem, we will fix $r$ to a specific value.

\subsection{Analysis}

We show that the reduction from Hypergraph Label Cover to MCSP satisfies two properties. In Lemma~\ref{lm:YES}, we show that for a Yes-Instance (see Theorem~\ref{th:labelcover}), the corresponding MCSP instance admits a solution with congestion one. Second, in case of a No-Instance, we show in Lemma~\ref{lm:NO} that every solution to the corresponding MCSP instance has congestion at least $r$.

\begin{lemma}[Yes-Instance]
\label{lm:YES}
If the Hypergraph Label Cover instance $(G,\pi,r)$ admits a labeling that strongly satisfies every edge, then the MCSP instance $(U,\X,t)$ constructed in Section~\ref{subsec:reduction} admits a solution where the congestion of every element in $U$ is exactly one.
\end{lemma}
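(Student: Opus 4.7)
The plan is to exhibit a natural candidate solution $\X'$ induced by the hypothesized labeling $\sigma$, and then verify directly from the defining properties of the partition systems that every element has congestion exactly one. The construction is the obvious one: for the labeling $\sigma$ that strongly satisfies every edge, set
\[
\X' \;=\; \bigl\{\, X(v,\sigma(v)) \;:\; v \in \V \,\bigr\}.
\]
Since $|\V| = t$, this has the correct cardinality, so it suffices to bound the congestion. The guiding intuition is that ``strongly satisfied'' collapses, for each edge $h$, the $r$ potentially relevant labels to a single common color $c_h \in C$, and that the $r$-partition $p_{c_h}(h)$ of $U_h$ then partitions the local contributions perfectly.

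The first key step is the observation that strong satisfaction of an edge $h = (v_1,\ldots,v_r)$ with $v_j \in \V_j$ means there is a color $c_h \in C$ such that $\pi_h^j(\sigma(v_j)) = c_h$ for every $j \in [r]$. I would record this fact once and use it as the pivot throughout. The second step is to localize the congestion computation: the universe $U$ is a disjoint union of the sets $U_h$, so it suffices to fix an arbitrary edge $h$ and an arbitrary $u \in U_h$, and count the sets in $\X'$ that contain $u$. By the definition of $X(v,\ell)$, the intersection $X(v,\sigma(v)) \cap U_h$ is empty unless $v \in h$, so only the $r$ sets $X(v_1,\sigma(v_1)), \ldots, X(v_r,\sigma(v_r))$ could possibly contribute.

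The third step, which is the heart of the argument, plugs the common color $c_h$ into the remaining $r$ sets. For each $j$ we get
\[
X(v_j,\sigma(v_j)) \cap U_h \;=\; A^j_{\pi_h^j(\sigma(v_j))}(h) \;=\; A^j_{c_h}(h).
\]
Thus the $r$ relevant contributions form exactly the blocks of the partition $p_{c_h}(h) = (A^1_{c_h}(h),\ldots,A^r_{c_h}(h))$ of $U_h$. By the partition property of the $(r,C)$-partition system (Definition~\ref{def:partition}), every element of $U_h$ lies in precisely one of these blocks, so the congestion of $u$ under $\X'$ is exactly one. Since $u$ and $h$ were arbitrary, we are done.

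I do not expect a real obstacle here; the proof is essentially a bookkeeping verification. The only place requiring care is ensuring that the indices $j$ on the partition blocks match the part $\V_j$ to which $v_j$ belongs (which is why the definition of $X(v,\ell)$ uses the superscript $j$ corresponding to $j(v)$), and that only the ``strongly satisfied'' hypothesis---not the weaker one---supplies a single color per edge. Notably, the $r$-intersecting property of the partition system is not needed for the Yes-Instance direction; it will be the crucial ingredient in the companion No-Instance lemma.
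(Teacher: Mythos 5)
Your proposal is correct and follows essentially the same route as the paper: pick $X(v,\sigma(v))$ for every vertex, restrict attention to one $U_h$ at a time, use strong satisfaction to collapse the relevant blocks to $A^1_c(h),\ldots,A^r_c(h)$ for a single color $c$, and conclude via the partition property that each element of $U_h$ is covered exactly once. Your added remark that the $r$-intersecting property is not needed here is accurate and matches the paper's use of it only in the No-Instance analysis.
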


\begin{proof}
Suppose that there is a labeling $\sigma$ that strongly satisfies every edge $h \in \eset$.
We will show how to pick $t = |\V|$ sets from $\X$ such that each element in $U$ is contained in exactly one set. This implies that the maximum congestion is one. For each $j \in [r]$ and each vertex $v \in \V_j$, we choose the set $X(v, \sigma(v))$. Thus, the total number of sets chosen is exactly $|\V|$. 

To see that the congestion is indeed one, we concentrate on the elements in $U_h$, where $h = (v_1,\ldots, v_r)$, $v_j \in \V_j$ for all $j \in [r]$, is one of the edges in $\E$. The picked sets that intersect $U_h$ are $X(v_j,\sigma(v_j))$, where $j \in [r]$. Since $h$ is strongly satisfied, $\pi_h$ maps all labels of the vertices in $h$ to a common $c \in C$, that is $\pi^j_h(\sigma(v_j)) = c$ for all $j \in [r]$. Thus  $U_h \cap X(v_j, \sigma(v_j)) = A^j_c(h)$. By the definition of a partition system (see Definition~\ref{def:partition}), the sets $A^1_c(h) \ldots A^r_c(h)$ partition the elements in $U_h$. This completes the proof.
\end{proof}

Now, we turn to the proof of Lemma~\ref{lm:NO}. Towards this end, we fix a collection $\X' \subseteq \X$ of size $t$ and show that some element in $U$ has congestion at least $r$ under $\X'$. The intuition being that many edges in $G = (\V, \E)$ are not even weakly satisfied, and the elements in $U$ corresponding to those edges incur large congestion. Recall that for a $v \in \V$, we define $j(v) \in \mathbb{N}$ to be such that $v \in \V_{j(v)}$. 

\begin{claim}
\label{cl:1}
For $v \in \V$, let $\lset_v = \cset{\ell \in L_{j(v)}}{X(v,\ell) \in \X'}$. For $h \in \E$, let $\Lambda_h = \cset{X(v,\ell) \in \X'}{v \in h}$ and $\lambda(h) = \abs{\Lambda_h}$. If the solution $\X'$ has congestion less than $r$ then $\abs{\lset_v} < r^2$ and $\abs{\Lambda_h} < r^3$. 
\end{claim}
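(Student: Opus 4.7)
The plan is to prove $|\lset_v| < r^2$ first, and then to derive the bound on $|\Lambda_h|$ as an immediate corollary. For the main bound I will argue the contrapositive: if $|\lset_v| \geq r^2$ for some $v \in \V$, I will exhibit an element of $U$ whose congestion under $\X'$ is at least $r$.

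Fix such a $v$, set $j = j(v)$, and pick any edge $h \in \E$ containing $v$ (one exists because every vertex of $G$ has positive degree $r|\E|/|\V|$). The key observation is that, by the construction of $\X$, $X(v,\ell) \cap U_h = A^{j}_{\pi^j_h(\ell)}(h)$ for every $\ell \in L_j$. Hence the damage that $\{X(v,\ell) : \ell \in \lset_v\}$ inflicts inside $U_h$ is governed entirely by the image $\pi^j_h(\lset_v) \subseteq C$, and I will split into two cases depending on its size.

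In the first case $|\pi^j_h(\lset_v)| \geq r$, so I pick labels $\ell_1,\ldots,\ell_r \in \lset_v$ whose images $c_1,\ldots,c_r \in C$ are pairwise distinct. The $r$-intersecting property (Definition~\ref{def:partition}), applied with the $r$ distinct colors $c_i$ and the (not necessarily distinct) indices $j_i = j$, guarantees that $\bigcap_{i=1}^r A^j_{c_i}(h)$ is non-empty, and any element in this intersection belongs to $X(v,\ell_i)$ for every $i$, incurring congestion at least $r$. In the second case $|\pi^j_h(\lset_v)| \leq r-1$, and pigeonhole forces some color $c \in C$ to receive at least $\lceil r^2/(r-1)\rceil \geq r$ labels of $\lset_v$; each corresponding set $X(v,\ell)$ then contains the whole (non-empty) part $A^j_c(h)$, so any element there again witnesses congestion at least $r$. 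Either outcome contradicts $\C(\X') < r$, so $|\lset_v| < r^2$ as claimed.

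The bound on $|\Lambda_h|$ follows immediately. For distinct vertices $v \in h$ the collections $\{X(v,\ell) : \ell \in \lset_v\}$ are pairwise disjoint (the first coordinate distinguishes them), and $h$ contains exactly $r$ vertices, so $|\Lambda_h| = \sum_{v \in h}|\lset_v| < r \cdot r^2 = r^3$. The main subtlety, and the step I expect to be trickiest to calibrate, is the choice of threshold $r^2$: it is exactly the value that forces at least one of the two cases to apply, since any smaller count could leave both too few distinct colors for the $r$-intersecting argument and too few labels per color for the pigeonhole argument.
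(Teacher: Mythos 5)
Your proof is correct and follows essentially the same route as the paper's: assume $\abs{\lset_v} \ge r^2$, split on whether the images under $\pi^{j}_h$ contain at least $r$ distinct colors (then invoke the $r$-intersecting property) or some color receives at least $r$ labels (then every element of that non-empty part has congestion $\ge r$), and finally bound $\abs{\Lambda_h}$ by summing $\abs{\lset_v}$ over the $r$ vertices of $h$. No gaps; the only difference is cosmetic (you make the pigeonhole count and the disjointness of the per-vertex collections explicit, which the paper leaves implicit).
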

\begin{proof} Since $\Lambda_h = \bigcup_{v \in h} \lset_v$, it suffices to prove $\abs{\lset_v} < r^2$ for all $v$. Assume otherwise, i.e., $\abs{\lset_v} \ge r^2$ for some $v \in \V_j$, $j \in [r]$. Let $h$ be any hyper-edge with $v \in h$. Consider the images of the labels in $\lset_v$ under the projection $\pi^j_h$. Either we have at least $r$ distinct images or at least $r$ elements in $L_v$ are mapped to the same element of $C$. 

In the former case, we have $r$ pairwise distinct labels $\ell_1$ to $\ell_r$ in $\lset_v$ and $r$ pairwise distinct labels $c_1$ to $c_r$ in $C$ such that $\pi^j_h(\ell_i) = c_i$ for $i \in [r]$. The set $X(v,\ell_i)$ contains $A^j_{c_i}(h)$ and $\bigcap_{i \in [r]} A^j_{c_i}(h) \not= \emptyset$ by property (2) of partition systems (see Definition~\ref{def:partition}). Thus some element has congestion at least $r$.

In the latter case, we have $r$ pairwise distinct labels $\ell_1$ to $\ell_r$ in $\lset_v$ and a label $c$ in $C$ such that $\pi^j_h(\ell_i) = c$ for $i \in [r]$. The set $X(v,\ell_i)$ contains $A^j_c(h)$ and hence every element in this non-empty set (property (2) of partition systems) has congestion at least $r$.
\end{proof}

\begin{definition}[Colliding Edge]
\label{def:colliding}
We say that an edge $h \in \E$ is \emph{colliding} iff there are sets $X(v,\ell), X(v',\ell') \in \X'$ with $v,v' \in h$, $v \not= v'$, and $\pi^{j(v)}_{h}(\ell) = \pi^{j(v')}_h(\ell')$. 
\end{definition}

\begin{claim}
\label{cl:colliding}
Suppose that the solution $\X'$ has congestion less than $r$, and more than a $r^4 2^{-\gamma r}$ fraction of the edges in $\E$ are colliding. Then there is a labeling $\sigma$ for $G$ that weakly satisfies at least  a $2^{-\gamma r}$ fraction of the edges in $\E$.
\end{claim}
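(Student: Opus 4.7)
The plan is to exhibit a labeling $\sigma$ via a simple randomized rounding driven by the collection $\X'$, and then show by a first-moment argument that it weakly satisfies a $2^{-\gamma r}$ fraction of the edges in expectation. For each vertex $v \in \V$, let $\lset_v$ be as in Claim~\ref{cl:1}; I would set $\sigma(v)$ to be a uniformly random element of $\lset_v$ when $\lset_v$ is non-empty, and otherwise assign $\sigma(v)$ an arbitrary fixed label in $L_{j(v)}$. All of these choices are made independently across vertices.

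First I would invoke Claim~\ref{cl:1}, whose hypothesis (congestion $<r$) is precisely the assumption of the present claim, to obtain $|\lset_v| < r^2$ for every $v \in \V$. Next, I would analyze a single colliding edge $h \in \E$: by Definition~\ref{def:colliding}, there exist distinct vertices $v,v' \in h$ and labels $\ell \in \lset_v$, $\ell' \in \lset_{v'}$ with $\pi^{j(v)}_h(\ell) = \pi^{j(v')}_h(\ell')$. If $\sigma(v) = \ell$ and $\sigma(v') = \ell'$ then $h$ is weakly satisfied; these two events are independent because $v \neq v'$, and each occurs with probability strictly greater than $1/r^2$ by the preceding size bound. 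Consequently, $\Pr[h \text{ is weakly satisfied by } \sigma] > 1/r^4$ for every colliding edge $h$.

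Combining the above with linearity of expectation and the hypothesis that more than an $r^4 2^{-\gamma r}$ fraction of the edges in $\E$ are colliding, the expected number of edges weakly satisfied by the random labeling $\sigma$ exceeds $\left(r^4 2^{-\gamma r}|\E|\right) \cdot \left(1/r^4\right) = 2^{-\gamma r}|\E|$. Hence some deterministic realization of $\sigma$ weakly satisfies at least a $2^{-\gamma r}$ fraction of the edges, which is exactly the labeling promised by the claim. Derandomization via the method of conditional expectations makes this constructive, though the statement only requires existence.

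The main obstacle is essentially bookkeeping: one needs to verify that for every colliding edge the two relevant sets $\lset_v$ and $\lset_{v'}$ are non-empty, which is automatic from Definition~\ref{def:colliding} since the witnessing labels themselves belong to those sets. The two factors of $1/r^2$ that appear in the per-edge analysis are precisely what forces the hypothesis to require as many as an $r^4 2^{-\gamma r}$ fraction of colliding edges, rather than something smaller; any tighter bound on $|\lset_v|$ in Claim~\ref{cl:1} would propagate directly to a weaker fraction here.
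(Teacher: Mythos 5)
Your proof is correct and follows essentially the same route as the paper: the same randomized labeling (uniform choice from $\lset_v$ when non-empty), the same use of Claim~\ref{cl:1} to bound each of the two relevant palettes by $r^2$, giving a per-colliding-edge weak-satisfaction probability of at least $1/r^4$, and the same linearity-of-expectation/first-moment conclusion. The only cosmetic difference is that the paper states explicitly that the two witnessing vertices lie in different parts $\V_j, \V_{j'}$ (which your argument implicitly uses, and which follows from the $r$-partite structure since distinct vertices of an edge lie in distinct parts), so no gap remains.
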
  

\begin{proof}
For each $v \in \V$, we define the label set $\lset_v = \{ \ell \in L_{j(v)} \, : \, X(v,\ell) \in \X'\}$. Then $\abs{\lset_v} < r^2$ by Claim~\ref{cl:1}. We construct a labeling function $\sigma$ using Algorithm~\ref{alg:labeling function}. 
\begin{algorithm}[h]
\DontPrintSemicolon
\SetAlgoNoEnd
\caption{An algorithm for constructing a labeling function.}\label{alg:labeling function}
\ForEach{vertex $v \in \V$}{
  \If{$\lset_v \neq \emptyset$}{
    Pick a color $\sigma(v)$ uniformly and independently at random from $\lset_v$\;
  } \Else {
    Pick an arbitrary color $\sigma(v)$ from $L_{j(v)}$\;
  }
}
\end{algorithm}

 Now we bound the expected fraction of weakly satisfied edges under $\sigma$ from below. Take any colliding edge $h \in \E$. This means that there are vertices $v \in \V_j$, $v' \in \V_{j'}$ with $j \neq j'$, and colors $\ell \in \lset_v$, $\ell' \in \lset_{v'}$ such that $v, v' \in h$ and $\pi^j_h(\ell) = \pi^{j'}_h(\ell')$. By Claim~\ref{cl:1}, $|\lset_v|$ and $|\lset_{v'}|$ are both at most $r^2$. Since the colors $\sigma(v)$ and $\sigma(v')$ are chosen uniformly and independently at random from their respective palettes $\lset_v$ and $\lset_{v'}$, we have $\Pr[\sigma(v) = \ell \text{ and } \sigma(v') = \ell'] \geq 1/r^4$. In other words, every colliding edge is weakly satisfied with probability at least $1/r^4$. Since more than a $r^4 2^{-\gamma r}$ fraction of the edges in $\E$ are colliding, from linearity of expectation we infer that the expected fraction of edges weakly satisfied by $\sigma$ is at least $2^{-\gamma r}$.
 \end{proof}

\begin{claim}
\label{cl:counting}
Let $\Lambda_h = \cset{X(v,\ell) \in \X'}{v \in h}$,  and $\lambda(h) = |\Lambda_h|$. We have $\sum_{h \in \eset} \lambda(h) = r |\eset|$.
\end{claim}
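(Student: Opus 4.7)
The plan is a straightforward double-counting argument on the pairs $(h, X(v,\ell))$ with $h \in \E$, $X(v,\ell) \in \X'$, and $v \in h$.

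First I would rewrite $\sum_{h \in \eset} \lambda(h)$ as the cardinality of the set of such incidence pairs, namely
\[
\sum_{h \in \eset} \lambda(h) \;=\; \Big|\{(h, X(v,\ell)) : h \in \E,\ X(v,\ell) \in \X',\ v \in h\}\Big|,
\]
which just unfolds the definition $\lambda(h) = |\Lambda_h| = |\{X(v,\ell) \in \X' : v \in h\}|$.

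Next I would sum over the other coordinate: for each fixed chosen set $X(v,\ell) \in \X'$, the number of edges $h$ it contributes to is exactly the number of edges containing the associated vertex $v$, i.e.\ the degree of $v$ in $G$. By Theorem~\ref{th:labelcover}, every vertex has the same degree $r |\E|/|\V|$. Hence
\[
\sum_{h \in \eset} \lambda(h) \;=\; \sum_{X(v,\ell) \in \X'} \frac{r|\E|}{|\V|} \;=\; |\X'| \cdot \frac{r|\E|}{|\V|}.
\]
Finally, I would substitute $|\X'| = t = |\V|$, which is built into the MCSP instance (Definition~\ref{def:newproblem}) and the choice $t \leftarrow |\V|$ from the reduction, to obtain $\sum_{h \in \eset} \lambda(h) = r|\E|$.

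There is essentially no obstacle here; the only thing to be careful about is that each element of $\X'$ is indeed associated with a single well-defined vertex $v$, so that the sum over $X(v,\ell) \in \X'$ really counts each chosen set once (even if several chosen sets share the same $v$ with different labels $\ell$). This is immediate from the indexing of sets $X(v,\ell)$ introduced in Section~\ref{subsec:reduction}.
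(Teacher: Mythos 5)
Your proof is correct and is essentially the paper's own argument: the paper phrases it as counting edges of a bipartite incidence graph between sets $X(v,\ell)$ and hyperedges $h$, while you count the incidence pairs directly, but both rest on the same two facts, namely $|\X'| = t = |\V|$ and the uniform vertex degree $r|\E|/|\V|$ from Theorem~\ref{th:labelcover}.
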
 
\begin{proof}
This is a simple counting argument. Consider a bipartite graph $H$ with vertex set $A\dot \cup B$, where each vertex in $A$ represents a set $X(v,\ell)$, and each vertex in $B$ represents an edge $h \in \eset$. There is an edge between two vertices iff the set $X(v,\ell)$ contains some element in $U_h$. The quantity $\sum_{h \in \eset} \lambda(h)$ counts the number of edges in $H$ where one endpoint is included in the solution $\X'$. Since $\X'$ picks $t = |\V|$ sets and each set has degree $r |\eset|/|\V|$ in the $H$ (see Theorem~\ref{th:labelcover}), the total number of edges that are chosen is exactly $ |\V| \times \left(r |\eset|/|\V|\right) = r|\eset|$. 
\end{proof}

 Let $\eset' \subseteq \eset$ denote the set of colliding edges, and define $\eset'' = \eset \setminus \eset'$. Suppose that we are dealing with a No-Instance (see Theorem~\ref{th:labelcover}), i.e.\ the solution $\X'$ has congestion less than $r$ and every labeling weakly satisfies at most a $2^{-\gamma r}$ fraction of the edges in $\E$. Then $\lambda(h) \leq r^3$ for all $h \in \E$ by Claim~\ref{cl:1}, and no more than $r^4 2^{-\gamma r}|\E|$ edges are colliding, i.e.\ $\abs{\eset'} \le r^4 2^{-\gamma r}\abs{\eset}$, by Claim~\ref{cl:colliding}. Using these facts we conclude that $\sum_{h \in \E'} \lambda(h) \leq r^7 2^{-\gamma r} |\E| < |\E|$, as by assumption $r^72^{-\gamma r}<1$. Now, applying Claim~\ref{cl:counting}, we get $\sum_{h \in \E''} \lambda(h) = r|\E| - \sum_{h \in \E'} \lambda(h) > (r-1) |\E|$. In particular, there is an edge $h \in \E''$ with $\lambda(h) \geq r$.

\newcommand{\barX}{\Lambda_h}

Recall that $\Lambda_h = \cset{X(v,\ell) \in \X'}{v \in h}$ are the sets in $\X'$  that intersect $U_{h}$ and note that $|\barX| = \lambda(h)\geq r$. Let $\X^* \subseteq \barX$ be a \emph{maximal} collection of sets  with the following property: For every two distinct sets $X(v,\ell), X(v',\ell') \in \X^*$ we have  $\pi^{j(v)}_{h}(\ell) \neq \pi^{j(v')}_{h}(\ell')$. Hence, from the definition of a partition system (see Definition~\ref{def:partition}), it follows that the intersection of the sets in $\X^*$ and the set $U_{h}$ is nonempty.

Now, consider any set $X(v,\ell) \in \barX \setminus \X^*$. Since the collection $\X^*$ is maximal, there must be at least one set $X(v',\ell')$ in $\X^*$ with $\pi^{j(v)}_{h}(\ell) = \pi^{j(v')}_{h}(\ell')$. Since $h$ is not colliding, we must have $j(v) = j(v')$. Consequently we get $X(v,\ell) \cap U_{h} = X(v',\ell') \cap U_{h}$. In other words, for every set $X \in \barX \setminus \X^*$, there is some set $X' \in \X^*$ where $X \cap U_{h} = X' \cap U_{h}$. Thus, $U_h \cap (\bigcap_{X \in \barX} X) = U_h \cap (\bigcap_{X \in \X^*} X) \neq \emptyset$. Every element in the intersection of the sets in $\Lambda_h$ and $U_{h}$ will have congestion $|\Lambda_h| \geq r$. This leads to the following lemma.

\begin{lemma}[No-Instance]
\label{lm:NO}
Suppose that every labeling weakly satisfies at most a $2^{-\gamma r}$ fraction of the edges in the hypergragph label cover instance $(G,\pi,r)$, for some universal constant $\gamma$ and that $r^7 2^{-\gamma r} < 1$. Then the congestion incurred by every solution to the MCSP instance $(U,\X,t)$ constructed in Section~\ref{subsec:reduction} is at least $r$.
\end{lemma}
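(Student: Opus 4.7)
}
The plan is to argue by contradiction: assume that some solution $\X'$ to the MCSP instance has congestion strictly less than $r$, and derive a labeling of $G$ that weakly satisfies more than a $2^{-\gamma r}$ fraction of the edges, contradicting the No-Instance hypothesis from Theorem~\ref{th:labelcover}. The three auxiliary claims already proved (Claims~\ref{cl:1}, \ref{cl:colliding}, and~\ref{cl:counting}) do most of the heavy lifting; the task is to stitch them together with a careful counting step and then exhibit a heavily-congested element using the $r$-intersecting property of the partition system.

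First, I would apply Claim~\ref{cl:1} to conclude that, under the congestion-$<r$ assumption, every edge $h \in \E$ satisfies $\lambda(h) \leq r^3$. Next, Claim~\ref{cl:colliding}, combined with the No-Instance guarantee that no labeling weakly satisfies more than a $2^{-\gamma r}$ fraction of the edges, forces the number of colliding edges $\E'$ to satisfy $|\E'| \leq r^4 2^{-\gamma r} |\E|$. Putting these two bounds together gives $\sum_{h \in \E'} \lambda(h) \leq r^7 2^{-\gamma r} |\E| < |\E|$ by the hypothesis $r^7 2^{-\gamma r} < 1$. Subtracting from the global total $\sum_{h \in \E} \lambda(h) = r |\E|$ supplied by Claim~\ref{cl:counting}, the non-colliding edges $\E'' = \E \setminus \E'$ must accumulate mass exceeding $(r-1)|\E|$; averaging then produces a non-colliding edge $h^*$ with $\lambda(h^*) \geq r$.

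The remaining step is to convert the fact that many sets $X(v,\ell) \in \X'$ hit $U_{h^*}$ into a single element of $U_{h^*}$ with congestion $\geq r$. I would select a maximal subcollection $\X^* \subseteq \Lambda_{h^*}$ whose members have pairwise distinct projections $\pi^{j(v)}_{h^*}(\ell)$. Because $h^*$ is non-colliding, any set in $\Lambda_{h^*} \setminus \X^*$ must share both its side $j(v)$ and its projected color with some representative in $\X^*$, so its restriction to $U_{h^*}$ coincides exactly with that representative's restriction; therefore $\bigcap_{X \in \Lambda_{h^*}} (X \cap U_{h^*}) = \bigcap_{X \in \X^*} (X \cap U_{h^*})$. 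The $r$-intersecting property of the $(r,C)$-partition system on $U_{h^*}$ (Definition~\ref{def:partition}) guarantees the latter intersection is nonempty, so any element in it is covered by all $\lambda(h^*) \geq r$ sets of $\Lambda_{h^*}$, contradicting the assumed congestion bound.

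The main subtlety — and the step I expect to require the most care — is the maximal-subcollection argument: one must use \emph{both} the non-colliding property (to guarantee that duplicate projections come from the same side $j(v)$, hence correspond to identical restrictions $A^{j(v)}_c(h^*)$) and the $r$-intersecting property (which is the only reason a common element exists at all). Every other piece is a straightforward counting or union-bound.
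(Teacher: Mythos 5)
Your proposal is correct and follows essentially the same route as the paper: the identical combination of Claims~\ref{cl:1}, \ref{cl:colliding}, and~\ref{cl:counting} to locate a non-colliding edge $h$ with $\lambda(h)\geq r$, followed by the same maximal-subcollection argument using the non-colliding property and the $r$-intersecting property to exhibit an element of $U_h$ with congestion at least $r$. The only cosmetic difference is that you phrase the contradiction as producing a good labeling, whereas the derived contradiction is really the congested element (the labeling step is already encapsulated, contrapositively, in Claim~\ref{cl:colliding}); this does not affect correctness.
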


We are now ready to prove the main theorem of this section.

\begin{theorem}
\label{th:hard:uniform}
The Robust $k$-Median problem $(V,\S,d)$ is $\Omega(\log m / \log \log m)$ hard to approximate on uniform metrics, where $m = |\S|$, unless $\textsf{NP} \subseteq \bigcap_{\delta > 0} \textsf{DTIME} (2^{n^{\delta}})$.
\end{theorem}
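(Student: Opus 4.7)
The plan is to assemble the pieces already in place --- the reduction from $r$-Hypergraph Label Cover to MCSP in Section~\ref{subsec:reduction}, the completeness and soundness statements in Lemmas~\ref{lm:YES} and~\ref{lm:NO}, and the equivalence to uniform-metric Robust $k$-Median in Lemma~\ref{lm:equivalence} --- and then to tune the label-cover parameter $r$ to maximize the hardness gap as a function of the final instance size $m = |\S|$.

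Starting from an $n$-variable 3-SAT formula I would chase the constructions end-to-end and bound the sizes. By Theorem~\ref{th:labelcover} the resulting label cover instance has $|\E| = n^{O(r)}$ edges and $|C| = 2^{O(r)}$. The partition systems from Lemma~\ref{lm:partition} have $m^{*} = r^{O(r)}\log|C| = r^{O(r)}$ elements each. Hence the MCSP universe has size $|U| = |\E|\cdot m^{*} = n^{O(r)}\cdot r^{O(r)}$, and by Lemma~\ref{lm:equivalence} the resulting uniform-metric Robust $k$-Median instance satisfies $m = |\S| = |U|$. Lemmas~\ref{lm:YES} and~\ref{lm:NO} together produce a $(1,r)$-gap: YES-instances admit a solution of congestion $1$ while NO-instances force congestion at least $r$, provided $r$ is large enough that $r^{7}2^{-\gamma r} < 1$.

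The main, and only non-routine, step is the parameter choice. Given an arbitrary $\delta > 0$, pick any $\epsilon \in (0,\delta)$ and set $r := \lceil n^{\epsilon}\rceil$. Then, for $r \le n$, we have $\log m = O(r \log n + r \log r) = O(n^{\epsilon}\log n)$ and $\log\log m = \Theta(\log n)$, so
\[
\frac{\log m}{\log\log m} \;=\; O(n^{\epsilon}) \;=\; O(r).
\]
A hypothetical polynomial-time $o(\log m/\log\log m)$-approximation would therefore distinguish the $(1,r)$-gap, and together with the reduction would decide 3-SAT in time $\mathrm{poly}(m) = 2^{O(n^{\epsilon}\log n)} \le 2^{n^{\delta}}$ for all sufficiently large $n$, placing $\textsf{NP}\subseteq\textsf{DTIME}(2^{n^{\delta}})$. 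Since $\delta$ was arbitrary, the existence of such an approximation forces $\textsf{NP}\subseteq\bigcap_{\delta>0}\textsf{DTIME}(2^{n^{\delta}})$, which is the contrapositive of the theorem. The only delicate point is the balancing: one must check that the hidden constants in the $O(r)$ exponents of Theorem~\ref{th:labelcover} and Lemma~\ref{lm:partition} do not swamp the estimate $\log m = O(r\log n)$ (they do not, because $r\log r \le r\log n$ when $r \le n$), and that $r^{7}2^{-\gamma r} < 1$ holds for $r = \lceil n^{\epsilon}\rceil$ and all sufficiently large $n$ (which is immediate).
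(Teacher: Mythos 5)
Your proposal is correct and follows essentially the same route as the paper's proof: compose Lemma~\ref{lm:equivalence} with the gap from Lemmas~\ref{lm:YES} and~\ref{lm:NO}, choose $r$ to be a small polynomial in $n$, verify $\log m/\log\log m = O(r)$, and bound the reduction plus algorithm time by $2^{n^{O(\delta)}}$. The only difference is cosmetic bookkeeping: you take $r=\lceil n^{\epsilon}\rceil$ with $\epsilon<\delta$ so the running time lands directly under $2^{n^{\delta}}$, whereas the paper sets $r=\lfloor n^{\delta}\rfloor$ and absorbs the extra $\log n$ factor into an exponent $O(n^{2\delta})$, relying on $\delta$ being arbitrary.
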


\begin{proof} Assume that there is a polynomial time algorithm for the Robust $k$-Median problem that guarantees an approximation ratio in $o(\log \abs{\S}/ \log \log \abs{\S})$. Then, by Lemma~\ref{lm:equivalence}, there is an approximation algorithm for the Minimum Congestion Set Packing problem with approximation guarantee $o(\log \abs{U}/\log\log \abs{U})$. 

Let $\delta > 0$ be arbitrary and set $r = \lfloor n^\delta \rfloor$, where $n$ is the number of variables in the $3$-SAT instance (see Theorem~\ref{th:labelcover}). Then $r^7 2^{-\gamma r} < 1$ for all sufficiently large $n$. 
We first bound the size of the MCSP instance $(U,\X,t)$ constructed in Section~\ref{subsec:reduction}. By Lemma~\ref{lm:partition}, the size of an $(r,C)$-partition system is $|Z| = r^{O(r)} \log \abs{C}$. By Theorem~\ref{th:labelcover}, we have $\abs{C} = 2^{O(r)}$. So each set $U_h$ has cardinality at most $r^{O(r)} \cdot r = r^{O(r)}$. Also recall that the number of sets in the MCSP instance is $|\X| = \sum_{j \in [r]} |\V_j| \cdot |L_j| = n^{O(r)}$, and that the number of elements is $|U| = m = |\eset| \cdot r^{O(r)} \leq (nr)^{O(r)} = n^{O(r)} = n^{O(n^\delta)} = 2^{O(r \log r)}$. Thus
$r \geq \Omega(\log m/ \log \log m)$. 

The gap in the optimal congestion between the Yes-Instance and the No-Instance is at least $r$ (see Theorem~\ref{th:labelcover} and Lemmas~\ref{lm:YES},~\ref{lm:NO}). More precisely, for Yes-instances the congestion is at most one and for No-instances the congestion is at least $r$. Since the approximation ratio of the alleged algorithm is $o(\log m/ \log \log m)$, it is better than $r$ for all sufficiently large $n$ and hence the approximation algorithm can be used to decide the satisfiability problem. 

The running time of the algorithm is polynomial in the size of the MCSP instance, i.e., is $\mathrm{poly}(n^{O(n^\delta)}) = n^{O(n^\delta)} = 2^{O(n^{2\delta})}$. Since $\delta > 0$ is arbitrary, the theorem follows. 
\end{proof}

\section{Hardness of Robust k-Median on Line Metrics}
\label{subsec:hard:line}

We will show that the reduction from $r$-Hypergraph Label Cover to Minimum Congestion Set Packing (MCSP) can be modified to give a $\Omega(\log m/ \log \log m)$ hardness of approximation for the Robust $k$-Median problem on line metrics as well, where $m = |\S|$ is the number of client-sets. 
For this section, it is convenient to assume that the label-sets are the initial segments of the natural numbers, i.e., $L_j = \{1, \ldots, \abs{L_j}\}$ and $C = \{1, \ldots, \abs{C}\}$. 

Given a Hypergraph Label Cover instance $(G,\pi,r)$, we first construct a MCSP instance $(U,\X,t)$ in accordance with the procedure outlined in Section~\ref{subsec:reduction}. Next, from this MCSP instance, we construct a Robust $k$-Median instance $(V,\S,d)$ as described below.
\begin{itemize}
\item We create a location in $V$ for every set $X(v,\ell) \in \X$. To simplify the notation, the symbol $X(v,\ell)$ will represent both a set in the instance $(U,\X,t)$, and a location in the instance $(V,\S,d)$. Thus, we have $V = \{X(v,\ell) \in \X\}$. Furthermore, we create a set of clients $S(e)$ for every element $e \in U$, which consists of all the locations whose corresponding sets in the MCSP instance contain the element $e$. Thus, we have $\S = \{ S(e) \, : \, e \in U\}$, where $S(e) = \{ X(v,\ell) \in \X \, : \, e \in X(v,\ell)\}$ for all $e \in U$. This step is same as in Lemma~\ref{lm:equivalence}.
\item We now describe how to embed the locations in $V$ on a given line. For every vertex $v \in \V_j, j \in [r]$, the locations $X(v,1), \ldots, X(v,\abs{L_j})$ are placed next to one another in sequence, in such a way that the distance between any two consecutive locations is exactly one. Formally, this gives $d(X(v,\ell), X(v,\ell')) = |\ell' - \ell|$ for all $\ell, \ell' \in L_j$. Furthermore, we ensure that any two locations corresponding to two different vertices in $\V$ are \emph{not close to each other}. To be more specific, we have the following guarantee: $d(X(v,\ell), X(v',\ell')) \geq 2$ whenever $v \neq v'$. It is easy to verify that $d$ is a line metric. 
\item We define $k \leftarrow |\X| - t$. 
\end{itemize}

Note that as $k = |\X| -t$, there is a one to one correspondence between the solutions to the MCSP instance and the solutions to the Robust $k$-Median instance. Specifically, a set in $\X$ is picked by a solution to the MCSP instance iff the corresponding location is \emph{not} picked in the Robust $k$-Median instance.

\begin{lemma}[Yes-Instance]
\label{lm:reduction:line1}
Suppose that there is a labeling strategy $\sigma$ that strongly satisfies every edge in the Hypergraph Label Cover instance $(G,\pi,r)$. Then there is a solution to the Robust $k$-Median instance $(V,\S,d)$ with objective one.
\end{lemma}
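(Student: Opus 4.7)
The plan is to lift the Yes-Instance solution of Lemma~\ref{lm:YES} for the MCSP instance to the Robust $k$-Median instance on the line, exploiting the fact that two locations $X(v,\ell)$ and $X(v,\ell')$ sharing the same vertex coordinate lie at distance $|\ell-\ell'|$, whereas locations with different vertex coordinates are at distance at least $2$.

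Concretely, I would proceed as follows. First, invoke Lemma~\ref{lm:YES}: the strong labeling $\sigma$ yields the MCSP solution $\X'=\{X(v,\sigma(v)):v\in\V\}$ of size $t=|\V|$, in which every element of $U$ has congestion exactly $1$. Following the one-to-one correspondence between MCSP solutions and $k$-Median solutions (since $k=|\X|-t$), define the open facility set
\[
F \;=\; V \setminus \{X(v,\sigma(v)) : v \in \V\},
\]
which has $|F|=|\X|-|\V|=k$. Note that $X(v,\ell)\in F$ iff $\ell\neq\sigma(v)$.

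Next, I would bound $\sum_{X\in S(e)} d(X,F)$ for an arbitrary $e\in U$. Partition the clients in $S(e)$ into those of the form $X(v,\ell)$ with $\ell\neq\sigma(v)$ and those of the form $X(v,\sigma(v))$. Clients of the first type lie at $F$, contributing distance $0$. Clients of the second type are exactly the witnesses $v\in\V$ such that $e\in X(v,\sigma(v))$, i.e.\ they are counted by the congestion $\C(e,\X')$ of $e$ under $\X'$; by Lemma~\ref{lm:YES} this congestion equals $1$, so there is a unique such vertex $v^{\ast}$.

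It then remains to bound $d(X(v^\ast,\sigma(v^\ast)),F)$. By the embedding, the consecutive location $X(v^\ast,\sigma(v^\ast)\pm 1)$ (whichever exists) is at distance exactly $1$, and by Theorem~\ref{th:labelcover} we have $|L_{j(v^\ast)}|=2^{O(r)}\ge 2$ so at least one neighbour exists; moreover this neighbour lies in $F$, because its label differs from $\sigma(v^\ast)$. Any location of a different vertex is at distance at least $2$, so the nearest facility distance is exactly $1$. Summing yields $\sum_{X\in S(e)} d(X,F)=1$, and taking the maximum over $e$ shows the Robust $k$-Median objective is at most $1$, as desired.

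The one delicate point I expect is precisely the last step: guaranteeing that a neighbour on the line is open in $F$ (and hence that the active client does not need to travel distance $2$ to escape its vertex block). This is where the two properties of the embedding — unit spacing within a vertex block and separation at least $2$ between blocks — combined with the fact that exactly one label per vertex is closed, are used crucially.
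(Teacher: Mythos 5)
Your proposal is correct and follows essentially the same route as the paper: take the complement of the Yes-Instance MCSP solution from Lemma~\ref{lm:YES} as the facility set $F$, observe that every client of $S(e)$ except the unique closed location $X(v^{\ast},\sigma(v^{\ast}))$ sits on an open facility and pays zero, and bound the remaining client's cost by $1$ via an open neighbouring label in the same vertex block. The only cosmetic difference is that you make the implicit requirement $|L_{j(v^{\ast})}|\ge 2$ explicit (note that $|L_j|=2^{O(r)}$ is an upper bound and does not formally imply this, though it does hold in the construction), whereas the paper glosses over it by directly asserting $d(X,F_{\X'})\in\{0,1\}$ for all locations $X$.
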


\begin{proof}
Recall the proof of Lemma~\ref{lm:YES}. We construct a solution $\X' \subseteq \X$, $|\X'| = t$, to the MCSP instance $(U,\X,t)$ as follows. For every vertex $v\in \V_j, j \in [r]$, the solution $\X'$ contains the set $X(v,\sigma(v))$. Now, focus on the corresponding solution $F_{\X'} \subseteq V$ to the Robust $k$-Median instance, which picks a location $X$ iff $X \notin \X'$. Hence, for every vertex $v \in \V_j$, $j \in [r]$, all but one of the locations $X(v,1), \ldots, X(v,\abs{L_j})$ are included in $F_{\X'}$. Since any two consecutive locations in such a sequence are unit distance away from each other, the cost of connecting any location in $V$ to the set $F_{\X'}$ is either zero or one, i.e., $d(X,F_{\X'}) \in \{0,1\}$ for all $X \in V = \X$.

For the rest of the proof, fix any set of clients $S(e) \in \S$, $e \in U$. The proof of Lemma~\ref{lm:YES} implies that the element $e$ incurs congestion one  under $\X'$. Hence, the element belongs to exactly one set in $\X'$, say $X^*$. Again, comparing the solution $\X'$ with the corresponding solution $F_{\X'}$, we infer that $S(e) \setminus F_{\X'} = \{X^*\}$. In other words, every location in $S(e)$, except $X^*$, is present in the set $F_{\X'}$. The clients in such locations require zero cost for getting connected to $F_{\X'}$. Thus, the total cost of connecting the clients in $S(e)$ to the set $F_{\X'}$ is at most: $\sum_{X \in S(e)} d(X, F_{\X'}) = d(X^*, F_{\X'}) \leq 1.$

Thus, we see that every set of clients in $\S$ requires at most unit cost for getting connected to $F_{\X'}$. So the solution $F_{\X'}$ to the Robust $k$-Median instance indeed has objective one. 
\end{proof}

\begin{lemma}[No-Instance]
\label{lm:reduction:line2}
Suppose that every labeling weakly satisfies at most a $2^{-\gamma r}$ fraction of the edges in the Hypergraph Label Cover instance $(G,\pi,r)$, for some constant $\gamma$. Then every solution to the Robust $k$-Median instance $(V,\S,d)$ has objective at least $r$.
\end{lemma}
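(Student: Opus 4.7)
The plan is to reduce the line-metric No-Instance bound to the MCSP congestion bound already proved in Lemma~\ref{lm:NO}. Given any candidate Robust $k$-Median solution $F \subseteq V$ of size $k = |\X| - t$, I identify each location with the set of $\X$ it represents and form the complement $\X' = \X \setminus F$. By construction $|\X'| = |\X| - |F| = t$, so $\X'$ is a feasible MCSP solution on exactly the instance produced by the reduction in Section~\ref{subsec:reduction}.

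Under the No-Instance hypothesis, Lemma~\ref{lm:NO} applies verbatim and guarantees that $\X'$ has congestion at least $r$ in the MCSP sense. Unwinding definitions, this yields an element $e \in U$ and a subset of at least $r$ sets $X(v,\ell) \in \X'$ all of which contain $e$. Since $S(e) = \{X(v,\ell) \in V : e \in X(v,\ell)\}$, the corresponding $r$ locations lie in $S(e)$ but, being in $\X \setminus F$, are not opened: in other words $|S(e) \setminus F| \ge r$.

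The final step is a one-line geometric observation about the embedding. Within a single vertex block the locations $X(v,1),\ldots,X(v,|L_{j(v)}|)$ are at integer spacings, hence at least $1$ apart, and across distinct vertex blocks the construction enforces a gap of at least $2$. Consequently any two distinct locations in $V$ are at distance at least $1$, so every $X \notin F$ satisfies $d(X,F) \ge 1$. Combining,
\[
\sum_{X \in S(e)} d(X,F) \;\ge\; \sum_{X \in S(e) \setminus F} d(X,F) \;\ge\; |S(e) \setminus F| \;\ge\; r,
\]
which shows that $F$ has Robust $k$-Median cost at least $r$, proving the lemma.

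I do not anticipate a real obstacle: all of the combinatorial work — the partition system, the colliding-edge analysis, and the non-colliding intersection argument that produces an element of congestion $r$ — has been encapsulated in Lemma~\ref{lm:NO}. The only thing the line embedding must provide here is that no two distinct locations are within distance less than $1$, and the separation parameters chosen in the construction (unit spacing within a vertex block, gap $\ge 2$ between blocks) were engineered precisely for this purpose. Thus the congestion-to-cost translation used in Lemma~\ref{lm:equivalence} for the uniform metric goes through unchanged on the line.
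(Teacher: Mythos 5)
Your proposal is correct and follows essentially the same route as the paper: pass to the complementary MCSP solution $\X' = \X \setminus F$, invoke Lemma~\ref{lm:NO} to obtain an element $e$ contained in at least $r$ unopened sets, and use the fact that the line embedding keeps distinct locations at distance at least one to charge each such location a connection cost of at least one, giving total cost at least $r$ for the group $S(e)$.
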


\begin{proof}
Fix any solution $F \subseteq V$ to the Robust $k$-Median instance $(V,\S,d)$, and let $\X'_F \subseteq \X$ denote the corresponding solution to the MCSP instance $(U,\X,t)$. Lemma~\ref{lm:NO} states that there is some element $e \in U$ with congestion at least $r$ under $\X'_F$. In other words, there are at least $r$ sets $X_1, \ldots, X_r \in \X'_F$ that contain the element $e$. The locations corresponding to these sets are not picked by the solution $F$. Furthermore, the way the locations have been embedded on a line ensures that the distance between any location and its nearest neighbor is at least one. Hence, we have $d(X_i,F) \geq 1$ for all $i \in [r]$. Summing over these distances, we infer that the total cost of connecting the clients in $S(e)$ to $F$ is at least $\sum_{i \in [r]} d(X_i, F) \geq r$. Thus, the solution $F$ to the Robust $k$-Median instance has objective at least $r$.
\end{proof}

Finally, applying Lemmas~\ref{lm:reduction:line1},~\ref{lm:reduction:line2}, and an argument similar to the proof of Theorem~\ref{th:hard:uniform}, we get the following result.

\begin{theorem}
\label{th:hard:line}
The Robust $k$-Median problem $(V,\S,d)$ is $\Omega(\log m/\log \log m)$ hard to approximate even on line metrics, where $m = |\S|$, unless $\textsf{NP} \subseteq \cap_{\delta > 0} \textsf{DTIME} (2^{n^{\delta}})$.
\end{theorem}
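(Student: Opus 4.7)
}
The plan is to mirror the argument used for Theorem~\ref{th:hard:uniform}, but replace Lemmas~\ref{lm:YES}--\ref{lm:NO} with their line-metric counterparts Lemmas~\ref{lm:reduction:line1}--\ref{lm:reduction:line2}. Assume for contradiction that there is a polynomial-time $o(\log m/\log\log m)$-approximation algorithm for Robust $k$-Median on line metrics, where $m = |\S|$. Fix an arbitrary $\delta > 0$. Given a $3$-SAT instance on $n$ variables, set $r = \lfloor n^\delta \rfloor$; for all sufficiently large $n$ we have $r^7 2^{-\gamma r} < 1$, so the hypotheses of Theorem~\ref{th:labelcover} and Lemma~\ref{lm:NO} are met. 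Apply Theorem~\ref{th:labelcover} to obtain an $r$-Hypergraph Label Cover instance $(G,\pi,r)$, then build the MCSP instance $(U,\X,t)$ as in Section~\ref{subsec:reduction}, and finally the Robust $k$-Median instance $(V,\S,d)$ with $d$ a line metric as at the start of Section~\ref{subsec:hard:line}.

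Next I would verify the size bound. By Theorem~\ref{th:labelcover}, $|\mathcal{E}| = n^{O(r)}$, $|\V_j|, |L_j|, |C| = 2^{O(r)}$, and by Lemma~\ref{lm:partition} each $U_h$ has size $r^{O(r)}\log|C| = r^{O(r)}$. Hence $m = |\S| = |U| = |\mathcal{E}| \cdot r^{O(r)} \le n^{O(r)} = 2^{O(r \log r)}$, which gives $r = \Omega(\log m / \log \log m)$. The whole construction runs in time polynomial in the size of the instance, i.e.\ $n^{O(r)} = 2^{O(n^{2\delta})}$.

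Then I would invoke the gap. If the $3$-SAT instance is satisfiable, Theorem~\ref{th:labelcover} gives a labeling that strongly satisfies every edge, and Lemma~\ref{lm:reduction:line1} then produces a Robust $k$-Median solution of cost at most $1$. If the formula is unsatisfiable, Theorem~\ref{th:labelcover} guarantees that every labeling weakly satisfies at most a $2^{-\gamma r}$ fraction of edges, and Lemma~\ref{lm:reduction:line2} forces every Robust $k$-Median solution to have cost at least $r$. The assumed $o(\log m/\log\log m)$-approximation yields a ratio strictly smaller than $r$ for all sufficiently large $n$, so applying it allows us to distinguish the two cases and thus to decide $3$-SAT.

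The overall running time to decide $3$-SAT is polynomial in $2^{O(n^{2\delta})}$, hence $2^{O(n^{2\delta})}$. Since $\delta > 0$ was arbitrary, this places $\textsf{NP}$ in $\bigcap_{\delta'>0}\textsf{DTIME}(2^{n^{\delta'}})$ (absorbing the constant $2$ by re-parameterizing $\delta$), contradicting the hypothesis. There is no genuine obstacle here: Lemmas~\ref{lm:reduction:line1}--\ref{lm:reduction:line2} already encapsulate the line-metric geometry, and the remaining work is bookkeeping of parameters, so the main care needed is to keep the chain $r \mapsto |U| \mapsto$ approximation ratio consistent with the requirement $r^7 2^{-\gamma r}<1$ demanded by Lemma~\ref{lm:NO}.
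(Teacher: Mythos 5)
Your proposal is correct and follows essentially the same route as the paper: the paper's proof of Theorem~\ref{th:hard:line} is exactly to apply Lemmas~\ref{lm:reduction:line1} and~\ref{lm:reduction:line2} in place of Lemmas~\ref{lm:YES} and~\ref{lm:NO} and repeat the parameter bookkeeping of Theorem~\ref{th:hard:uniform}, including the choice $r = \lfloor n^\delta \rfloor$, the bound $m = |\S| = |U| = 2^{O(r\log r)}$, and the condition $r^7 2^{-\gamma r} < 1$ needed for the no-instance lemma.
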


\section{Heuristics}\label{sec:experiments}

The Robust $k$-Median problem is a hard to approximate real-world problem and as such heuristic solutions are interesting. In this section, we complement our negative theoretical results with an evaluation of simple heuristics for the Robust $k$-Median problem. In particular we look at two greedy strategies and two variants of a local search approach. We consider a slight generalization of the problem where clients and facilities are at separate locations. This is more realistic and no easier than the original problem, as one can simply place a facility at every client position to solve an instance of the problem as defined in Definition~\ref{def:mainproblem}. Due to space constraints, the full version of this section is deferred to Appendix~\ref{app:heuristics}.

We implemented\footnote{Code and data are available at \url{http://resources.mpi-inf.mpg.de/robust-k-median/code-data.7z}} and compared the following heuristics to the LP relaxation (see \ref{exp:methods}).\smallskip

\noindent \textbf{Greedy Upwards.} Initialize all facilities as closed. Open the facility that reduces the cost maximally. Repeat until $k$ facilities are open.\smallskip

\noindent \textbf{Greedy Downwards.} Initialize all facilities as open. Close the facility that increases the cost minimally. Repeat until $k$ facilities are open.\smallskip

\noindent \textbf{Local Search.} Open $k$ random facilities. Compare all solutions that can be obtained from the current solution by closing $\ell$ facilities and opening $\ell$ facilities. Replace the current solution by the best solution found. Repeat until the current solution is a local optimum. In the experiments we use $\ell=2$.\smallskip

\noindent \textbf{Randomized Local Search.} Same as Local Search, but instead of considering \emph{all} solutions in the neighborhood, sample only a random subset. The size of the subset is an additional parameter to the heuristic. In the experiments we use $\ell=3$ and 200 random neighbors.\smallskip

We generate three kinds of 2D-instances. In the first, \emph{uniform}, the clients are uniformly distributed and all groups have the same size. The other kinds of instances cluster the client groups according to gaussian distributions. The intuition is that in real world instances client groups have something in common, e.g.\ all come from the same city. They two kinds differ in the number of clients per group. We have \emph{gauss-const} instances where all groups have the same size and \emph{gauss-exp} instances where group sizes follow an exponential distribution.

\begin{table}\centering
\begin{tabular}{lccc}\toprule
Heuristic         & Uniform     & Gauss-Const   & Gauss-Exp   \\\midrule
Greedy Up         & 1.65  (1.49)  & 5.18 (5.24) & 6.63 (5.94)\\
Greedy Down       & 1.45  (1.42)  & 2.92 (2.92) & 2.12 (2.05)\\
Local Search      & 1.13  (1.12)  & 1.63 (1.62) & 1.41 (1.39)\\
Randomized Local Search & 1.53  (1.48)  & 2.15 (2.29)   & 2.37 (2.36)\\
\bottomrule
\end{tabular}
\caption{Mean Performance as a multiple of the LP relaxation value, rounded to three digits. In parentheses we provide the median. 1654 uniform instances, 1009 Gauss-Const instances, and 2029 Gauss-Exp instances of varying sizes were solved. The reported performance is over the instances where the heuristics perform worse that the LP relaxation.}
\label{tab:uniform means}
\end{table}

Table~\ref{tab:uniform means} summarizes the results. The performance differences in Table~\ref{app:tab:uniform means} are statistically significant with a very small two-sided $p$-value, according to a Wilcoxon signed-rank test, except for the difference between Greedy Downward and Randomized Local Search on Uniform and Gauss-Const instances. In these cases the $p$-value is 0.66, respectively 0.08.

Since we use an LP relaxation as a comparison point, we do not know whether the instances where the heuristics find a worse solution are actually hard for the heuristics or whether the LP relaxation provides a much too low bound. To investigate this we had a closer look at instances where both Greedy down and Local Search perform badly. For three instances we solved the integer linear program. In these instances at least it was indeed the case that the LP relaxation yielded a bad bound. This suggests that the heuristics work even better than the numbers in Table~\ref{app:tab:uniform means} indicate.

As expected instances where the \emph{robust} nature of the Robust $k$-Median problem are not as important because groups are distributed uniformly are easier than the more realistic instances where groups form clusters. For the two better heuristics, Greedy Downwards and Local Search, also perform better on instances with uneven group sizes. Here too, one can speculate that few groups dominate the problem, and finding a solution that minimizes maximum costs becomes easier.

The good performance of these simple heuristics indicate that although the Robust $k$-Median problem is hard to approximate in the worst case, a heuristic treatment can effectively find a very good approximation.

\section{Conclusion and Future Work}

We show a logarithmic lower bound for the Robust $k$-median problem on the uniform and line metrics, implying that there is no good approximation algorithm for the problem.  
However, the empirical results suggest that real-world instances are much easier, so it is interesting to see whether incorporating real-world assumptions helps reducing the problem's complexity.

For instance, if we assume that the diameter of each set $S_i$ is at most an $\epsilon$ fraction of the diameter $\Delta = \max_{u,v} d(u,v)$ of the input instance, can we obtain a constant approximation factor? This case captures the notion of ``locality'' of the communities.
We note that in our hardness instances the diameter of each set $S_i$ is $\Delta$ for uniform metric and at least $\Delta/2$ in the line metric, so these hard instances would not arise if we have the locality assumption.  
Another interesting case is a random instance where the sets $S_i$ are randomly generated by an unknown distribution. 

One can also approach this problem from the parameterized complexity angle. 
In particular, can we obtain an $O(1)$ approximation algorithm in time $g(k) \operatorname{poly}(n)$?

\bibliography{bibliography}

\newpage

\appendix 

\section{Hypergraph Label Cover}
\label{sec:labelcover}
An instance of $r$-Hypergraph Label Cover is equivalent to the $r$-Prover system as used by Feige~\cite{feige} in proving the hardness of approximation for Set Cover.
We discuss the equivalence in this section.

In the $r$-prover system, there are $r$ provers $P_1,\ldots, P_r$ and a verifier $V$. 
Each prover is associated with a codeword of length $r$ in such a way that the hamming distance between any pair $P_i, P_j$ is at least ${\sf ham}(P_i, P_j) = r/2$; this is possible if $r$ is a power of two because we can use Hadamard code. Given an input 3-SAT formula $\phi$, the verifier selects $r$ clauses uniformly and independently at random. 
Call these clauses $C_1,\ldots, C_r$. 
From each such clause, the verifier selects a variable uniformly and independently at random. 
These variables are called $x_1,\ldots, x_r$.
Prover $P_i$ receives a clause $C_j$ if the $j$th bit of its codeword is $0$; otherwise, it receives variable $x_j$.
The property of Hadamard code guarantees that each prover would receive $r/2$ clauses and $r/2$ variables. 

Then each prover $P_i$ is expected to give an assignment to all involved variables it receives and sends this assignment to the verifier.  
The verifier then looks at the answers from $r$ provers and has two types of acceptance predicates. 
\begin{itemize} 
\item (Weak acceptance) At least one pair of answers is consistent. 

\item (Strong acceptance) All pairs of answers are consistent.  
\end{itemize}   

Applying parallel repetition theorem~\cite{Raz98}, Feige argues the following. 

\begin{theorem} (\cite[Lemma 2.3.1]{feige})
If $\Phi$ is a satisfiable 3-SAT(5) formula, then there is provers' strategy that always causes the verifier to accept. 
Otherwise, the verifier weakly accepts with probability at most $r^2 2^{- \gamma r}$ for some universal constant $\gamma >0$.
\end{theorem}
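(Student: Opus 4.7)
The plan is to treat completeness and soundness separately, with the real work happening in the soundness analysis via parallel repetition applied to pairs of provers.

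For completeness, I would start from any satisfying assignment $\sigma$ of $\Phi$ and define each prover $P_i$'s strategy to simply answer according to $\sigma$: whenever $P_i$ receives a clause, it returns the restriction of $\sigma$ to the three variables of that clause (which satisfies the clause since $\sigma$ is satisfying); whenever $P_i$ receives a variable, it returns $\sigma$ of that variable. Because all provers are answering from the same global assignment, every pair of provers is automatically consistent on any shared query, so in particular every pair is consistent and strong (hence weak) acceptance occurs with probability $1$.

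The soundness direction is the technical core. First I would recall that a single round of the basic $2$-prover $1$-round protocol for 3-SAT(5) (verifier picks one clause, one variable in that clause, sends clause to one prover and variable to the other, checks consistency of the variable's value and that the clause is satisfied) has perfect completeness and soundness strictly bounded away from $1$ by some absolute constant $\epsilon_0$; this is a standard consequence of the PCP theorem for 3-SAT(5). Next, I would fix any pair of provers $P_i, P_j$ and use the key Hadamard-code property: their codewords agree in exactly $r/2$ positions, so in $r/2$ of the verifier's $r$ coordinates both receive the same type of question (both a clause or both a variable). Focusing on the remaining structure, the $r/2$ coordinates in which the codewords disagree induce a distribution in which $P_i$ holds a clause and $P_j$ holds the corresponding variable (or vice versa), which is exactly the basic clause/variable game repeated in parallel $r/2$ times (independently across coordinates by construction).

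With this in place, the hard step is invoking Raz's parallel repetition theorem (or the Holenstein/Rao strengthening) on the pairwise protocol between $P_i$ and $P_j$. Since the answer alphabet has constant size and the single-round soundness is $1-\epsilon_0$, the $(r/2)$-fold parallel repetition has soundness at most $2^{-\gamma' r}$ for some universal constant $\gamma' > 0$. Thus the probability that $P_i$ and $P_j$ give consistent answers on all of their $r/2$ shared coordinates is at most $2^{-\gamma' r}$ when $\Phi$ is unsatisfiable. Finally, a union bound over the $\binom{r}{2} \leq r^2/2$ pairs of provers gives the weak acceptance probability bound of at most $r^2 \cdot 2^{-\gamma' r}$, which matches the claim with $\gamma = \gamma'$ after absorbing the constant factor into the exponent for large $r$.

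The main obstacle is the clean application of parallel repetition: one must argue that the conditional distribution of queries to the pair $(P_i, P_j)$, after fixing the positions of agreement/disagreement of their codewords, really is an honest parallel repetition of the single-coordinate game, with independent coordinates and the same verifier predicate. Once this structural point is established, the quantitative bound follows directly from Raz's theorem and the union bound; everything else is bookkeeping around the Hadamard code and the constant-size answer alphabet.
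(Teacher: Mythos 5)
Your proposal is correct and follows the same route as the source: the paper does not reprove this statement but imports it verbatim from Feige, whose proof of Lemma 2.3.1 is precisely the argument you sketch --- completeness from a single global satisfying assignment, and soundness by restricting each pair of provers to the $r/2$ coordinates where their Hadamard codewords differ, applying Raz's parallel repetition theorem to the induced clause/variable game, and union-bounding over the $\binom{r}{2}$ pairs to get $r^2 2^{-\gamma r}$. The structural obstacle you flag at the end is resolved exactly as you anticipate: the queries in the agreement coordinates are independent of those in the disagreement coordinates and constitute shared side information, so they can be fixed without increasing the value of the game, leaving an honest $(r/2)$-fold repetition (and dropping the consistency checks on the agreement coordinates only weakens the predicate, so the bound still applies).
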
 

Now we show how Theorem~\ref{th:labelcover} follows by constructing the instance of Hypergraph Label Cover $(V,E)$ based on the $r$-prover system.  
For each prover $j$, we create a set $V_j$ consisting of vertices $v$ that correspond to possible query sent to prover $j$, so we have $|V_j| = (5n/3)^{r/2} n^{r/2}$.   
For each possible random string $x$, we have an edge $h_x$ that contains $r$ vertices, corresponding to queries sent to the provers. 
It can be checked that the total number of possible random strings is $(5n)^r$, and the degree of each vertex is $3^{r/2} 5^{r/2} = 15^{r/2}$; notice that this is equal to $r |E|/|V|$.  
A prover strategy corresponds to the label of vertices, and the acceptance probability is exactly the fraction of satisfied edges.  
Moreover, for each possible query, the number of possible answers is at most $7^r$ (for each clause, there are $7$ ways to satisfy it). This implies that $|L_j| \leq 7^r$.


\section{Heuristics}\label{app:heuristics}

The Robust $k$-Median problem is a real-world problem and as such needs to be solved as well as possible despite its hardness of approximation. In this section, we complement our negative theoretical results with an experimental evaluation of different simple heuristics for the Robust $k$-Median problem. In particular we look at two variants of a greedy strategy and two variants of a local search approach. We consider a slight generalization of the problem where clients and facilities are at separate locations. This is more realistic and no easier than the original problem, as one can simply place a facility at every client position to solve an instance of the problem as defined in Definition~\ref{def:mainproblem}.

This is by no means an exhaustive exploration of the possible solution space. However, the results we obtain indicate that a heuristic treatment of the Robust $k$-Median problem can yield surprisingly good solutions, even if the heuristics are very naive.

For our experiments we consider instances in the plane, as these are closest to the real-world motivation for the problem. We wanted to check how the structure of the instance influences the performance of the heuristics. We suspected that instances where the clients are distributed uniformly are easy, as intuitively a solution that is good for one group of clients is good for all groups. 

The robust version of the k-median problem is considered because often the exact set of clients is not known before choosing facility locations and one wants to perform well even if the worst set of possible clients turns out to be realized. It is reasonable to assume that every group of clients has something in common, for example that they come from a similar region, like a city. Therefore more realistic instances for the Robust $k$-Median problem have the groups form clusters in space. We also generate such instances for testing our heuristics.

\subsection{Methods}\label{exp:methods}

Since solving Robust $k$-Median instances to optimality is infeasible for the instances we consider\footnote{We attempted solving three instances optimally, see Figure~\ref{fig:hard instances}, but gave up on the third after nearly half a year of CPU time was consumed.}, we compare the performance of the various heuristics to the value of a LP-relaxation. We have a variable $x_j$ for each possible median location and variables $y_{ij}$ that indicate whether client $i$ is served by facility $j$. The LP is then as follows.
\begin{align*}
\min\quad & T\\
s.t.\quad & y_{ij} - x_j \le 0 & \forall i,j\\
& \sum_{j} y_{ij} \ge 1 & \forall i\\
& \sum_{i\in g} d(i,j) \cdot y_{ij} \le T & \forall \text{ groups of clients } g\\
& \sum_j x_j \le k \quad\text{and}\quad 0\le x_j \le 1 & \forall j\\
& 0\le y_{ij} \le 1 &\forall i,j
\end{align*}
To solve the LP we use the Gurobi solver~\cite{gurobi}, version 5.5.0, on a 64-bit Linux system.

Note that the assignment of the $y_{ij}$ variables is immediately clear from the assignment of the $x_j$. For location $i$, let $j_1$, $j_2$, \ldots $j_n$ be the locations ordered by increasing distance. Then $y_{i j_\ell} = \min(x_{j_\ell},1 - (y_{ij_1} + \ldots + y_{i j_{\ell - 1}}))$. The constraint $y_{i j_\ell} \le \min(\, ,\, )$ is already expressed by the first two constraints. It could however be put into the objective via the big $M$-method. Consider a minimization problem $\min T$ subject to $x = \min(b,c)$. Let $M$ be large integer and consider $\min T + M t$ subject to $x \le b$, $x \le c$, $t \le b-x$, and $t \le c -x$. Observe that $t = \min(b,c) - x$ in an optimal solution. One needs to choose $M$ big enough so that $t$ must be zero in an optimal vertex solution. It is however unclear whether this will speed up the solution. We have not tried this method.

We implemented and compared the following heuristics:\medskip

\noindent \textbf{Greedy Upwards.} Initialize all facilities as closed. Open the facility that reduces the cost maximally. Repeat until $k$ facilities are open.\medskip

\noindent \textbf{Greedy Downwards.} Initialize all facilities as open. Close the facility that increases the cost minimally. Repeat until $k$ facilities are open.\medskip

\noindent \textbf{Local Search.} Open $k$ random facilities. Compare all solutions that can be obtained from the current solution by closing $\ell$ facilities and opening $\ell$ facilities. Replace the current solution by the best solution found. Repeat until the current solution is a local optimum. In the experiments we use $\ell=2$.\medskip

\noindent \textbf{Randomized Local Search.} Same as Local Search, but instead of considering \emph{all} solutions in the neighborhood, sample only a random subset. The size of the subset is an additional parameter to the heuristic. In the experiments we use $\ell=3$ and 200 random neighbors.\bigskip

Note that taking the solution of one of the greedy algorithms as starting point for a local search is an obvious improvement, but this would prevent us from comparing the local search algorithm with the greedy heuristic.

The local search heuristic is closely related to Lloyd's algorithm for the k-means problem. In Lloyd's algorithm, a random set of centers is chosen and iteratively updated by moving the centers to the centroids of the clients that fall in their voronoi cell. This improves the total distance from the centers to all clients in every iteration.

In our setting, we want to reduce the cost of the group of clients that currently incurs the maximal cost. This can be done by moving a facility closer to this group of clients, that is, closing one facility and opening another that reduces the objective function. The local search algorithm, by closing and opening more than one facility at a time, does this at least as well.

We create instances in the plane and use the euclidean distance. We create two types of instances. In the first type the clients and facilities are uniformly distributed in a $100\times 100$ square. We call these instances the \emph{uniform} instances. In these instances all groups of clients contain the same number of clients. The $k$ we use for the experiments is 7.

The second kind uses random gaussian distributions to sample client positions. To generate the gaussian distributions we sample a matrix $M$ with $v_1,v_2$ on the diagonal, where the two values are chosen uniformly at random from $[0,50]$, the matrix is then rotated by a uniformly random angle. The result is the covariance matrix of the gaussian distribution. The mean is a random point in a $100\times 100$ square. These instances we call \emph{gauss}. We generate two subgroups of instances, in the first subgroup, \emph{gauss-const}, all groups of clients have the same number of clients, in the second subgroup, \emph{gauss-exp}, the number of clients in a group is sampled from an exponential distribution. Figure~\ref{fig:instances} shows examples for the different kind of instances we generate.
\begin{figure}
\centering
\subfloat[Uniform]{
\includegraphics[width=0.31\linewidth]{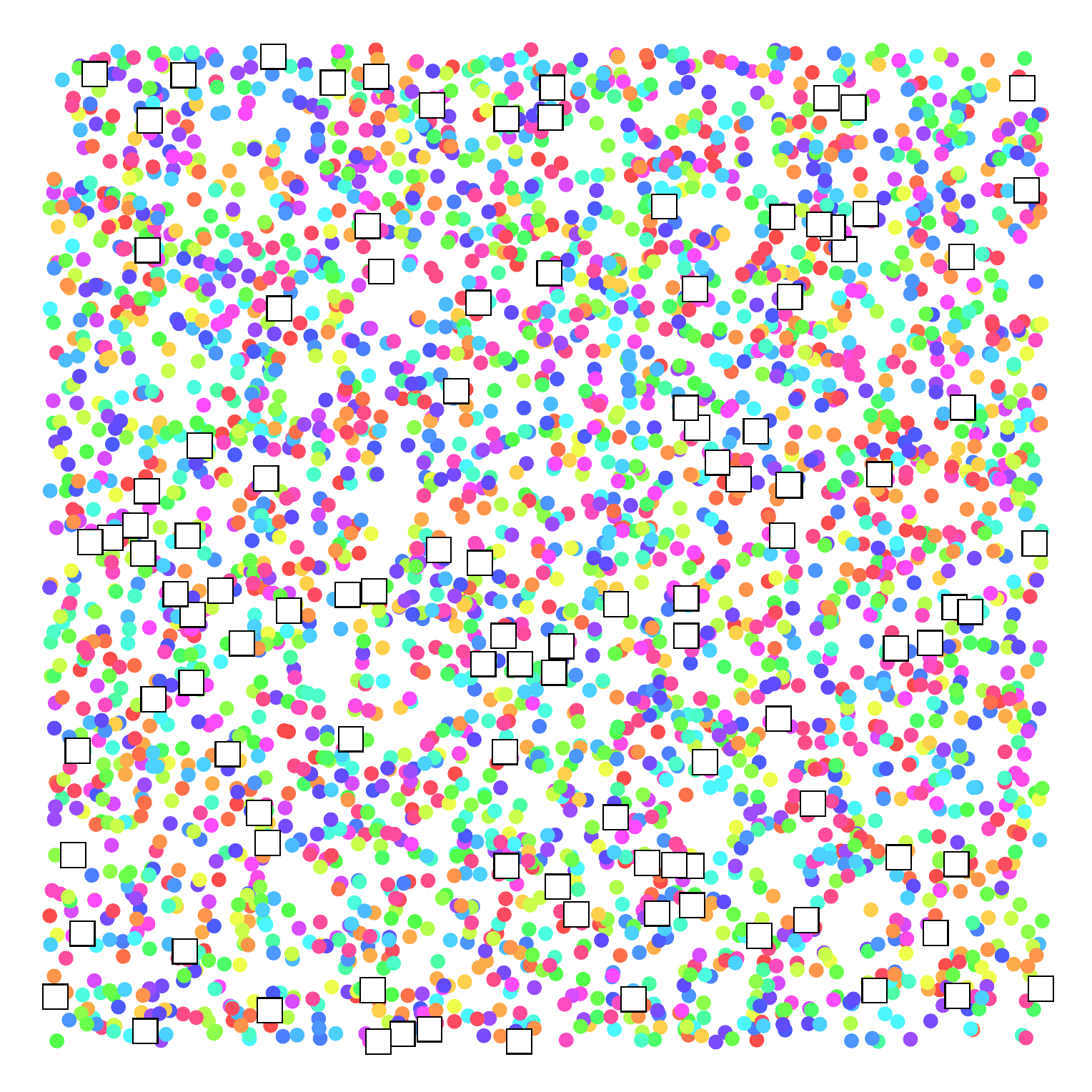}\label{fig:uniform}}
\subfloat[Gauss-Const]{\includegraphics[width=0.31\linewidth]{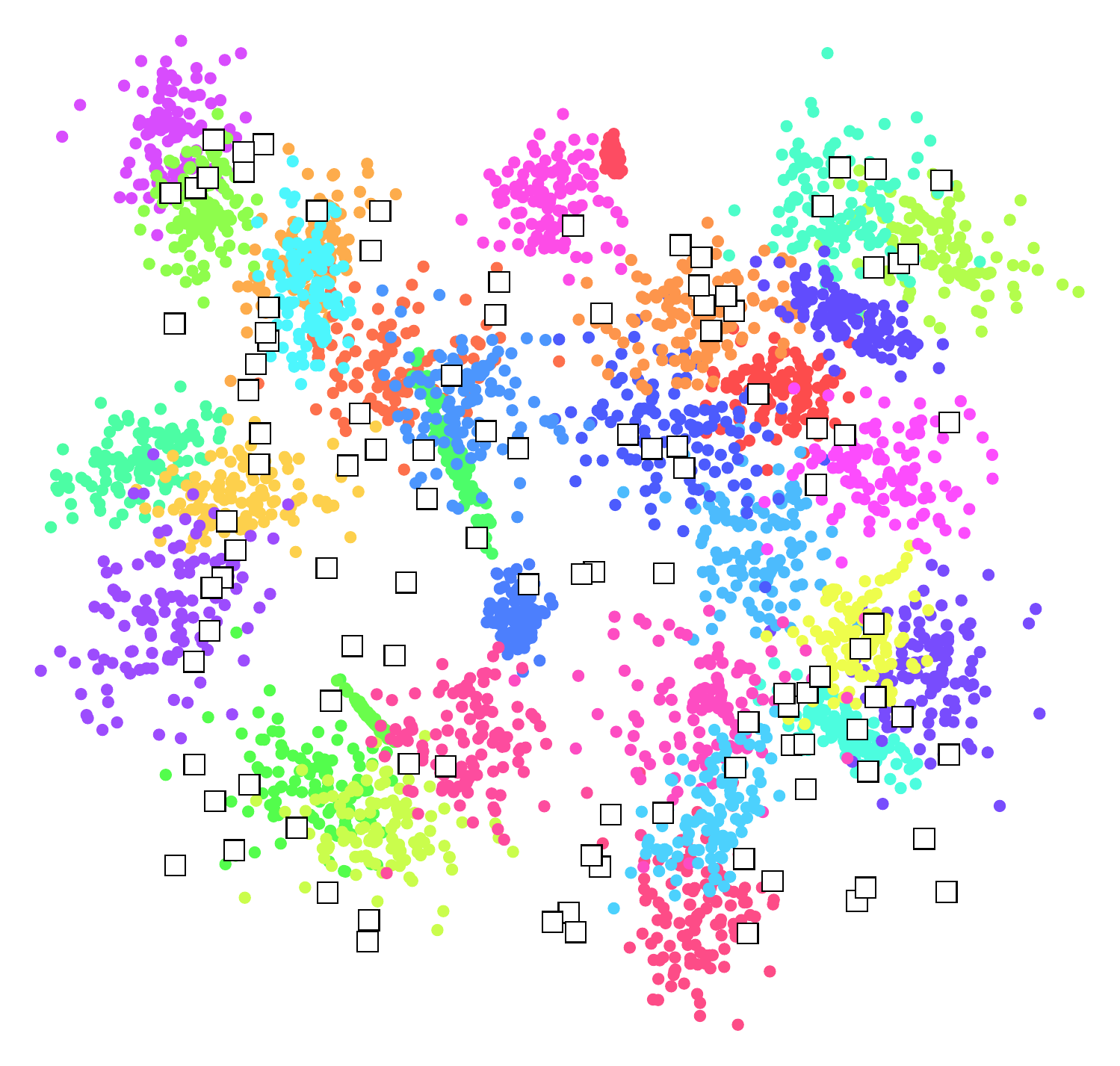}\label{fig:gauss-const}}
\subfloat[Gauss-Exp]{\includegraphics[width=0.31\linewidth]{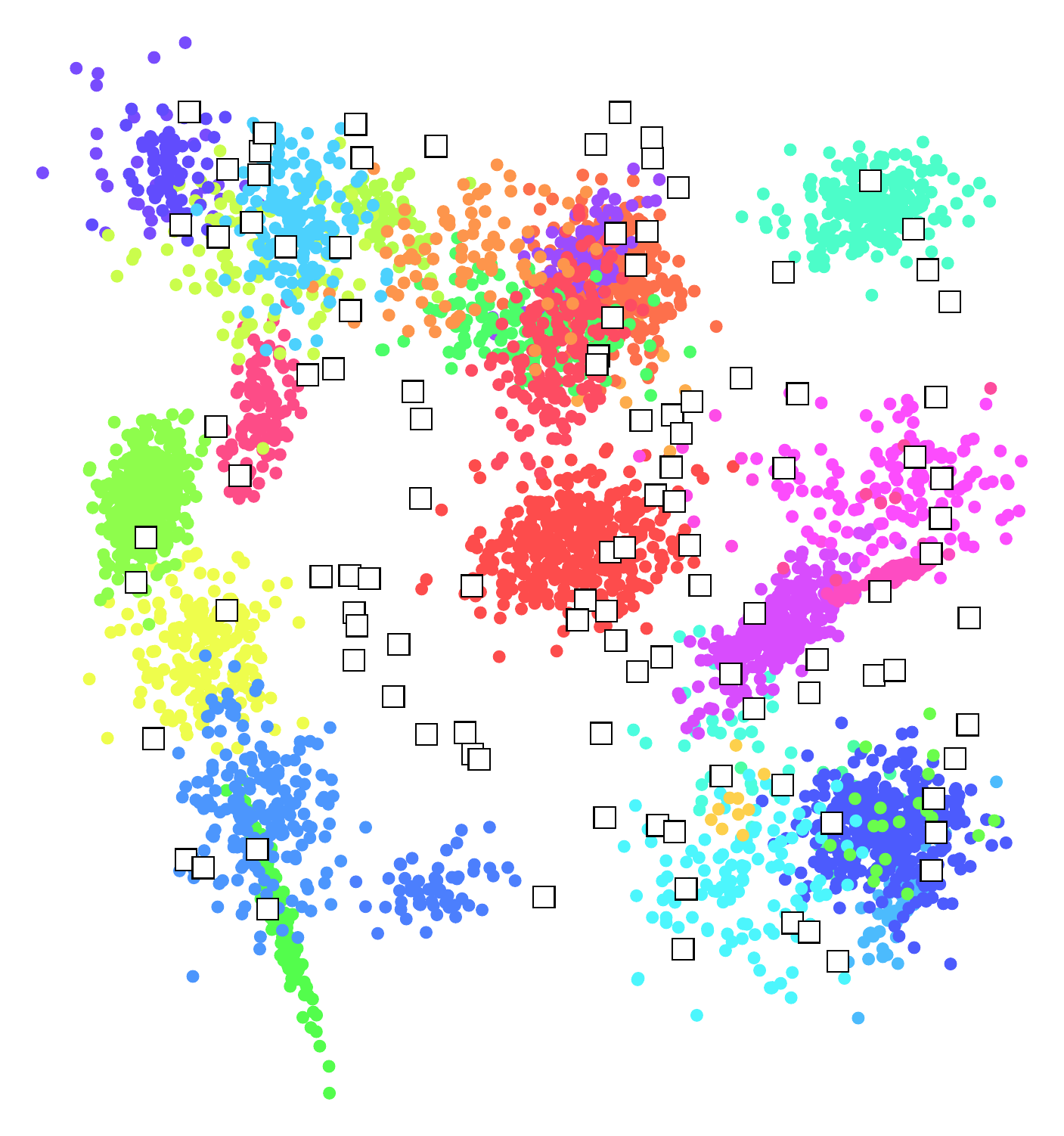}\label{fig:gauss-exp}}
\caption{Examples for the kind of instances we generate. Circles are clients, squares are facilities, colors indicate group membership.}
\label{fig:instances}
\end{figure}

As we didn't put much effort into optimizing our heuristics for speed (for example we don't use spatial search structures to find nearest neighbors), we don't report execution time and focus solely on solution quality. Nevertheless it is clear that the greedy strategies are much simpler to implement and much faster than the local search heuristics.

We report average performance on instances where the solution is worse than the LP value, as small, easy instances otherwise skew the results. To conclude relative performance advantages between heuristics we use a Wilcoxon signed-rank test as implemented in SciPy 0.12.0.

All computer code we wrote to run the experiments and analyze the results, as well as the instances we solved, is available online at \url{http://resources.mpi-inf.mpg.de/robust-k-median/code-data.7z}.

\subsection{Results}

\begin{table}\centering
\begin{tabular}{lccc}\toprule
Heuristic         & Uniform     & Gauss-Const   & Gauss-Exp   \\\midrule
Greedy Up         & 1.65  (1.49)  & 5.18 (5.24) & 6.63 (5.94)\\
Greedy Down       & 1.45  (1.42)  & 2.92 (2.92) & 2.12 (2.05)\\
Local Search      & 1.13  (1.12)  & 1.63 (1.62) & 1.41 (1.39)\\
Randomized Local Search & 1.53  (1.48)  & 2.15 (2.29)   & 2.37 (2.36)\\
\bottomrule
\end{tabular}
\caption{Mean Performance as a multiple of the LP relaxation value, rounded to three digits. In parentheses we provide the median. 1654 uniform instances, 1009 Gauss-Const instances, and 2029 Gauss-Exp instances of varying sizes were solved. The reported performance is over the instances where the heuristics perform worse that the LP relaxation.}
\label{app:tab:uniform means}
\end{table}
\begin{table}\centering
\subfloat[Uniform]{\begin{tabular}{lcccccccccc}
\toprule
Clients & \multicolumn{10}{c}{Facilities}\\ \cmidrule(r){1-1}\cmidrule(l){2-11}
  & \multicolumn{2}{c}{10} & \multicolumn{2}{c}{110} & \multicolumn{2}{c}{210} & \multicolumn{2}{c}{310} & \multicolumn{2}{c}{410} \\ \cmidrule(lr){2-3}\cmidrule(lr){4-5}\cmidrule(lr){6-7} \cmidrule(lr){8-9} \cmidrule(lr){10-11}
  & Greedy & Search & GD & LS & GD & LS & GD & LS & GD & LS\\
10  & 1.00&1.00&  1.12&1.00&  1.31&1.01&  1.39&1.02&  1.40&1.01  \\
160 & 1.01&1.01&  1.6&1.17&  1.63&1.17&  1.68&1.15&  1.63&1.15 \\
310 & 1.01&1.01&  1.64&1.21&  1.69&1.19&  1.70&1.19&  1.75&1.18 \\
460 & 1.01&1.01&  1.68&1.22&  1.73&1.21&  1.71&1.21&  1.73&1.21 \\\midrule
110 & 1.00&1.00&  1.17&1.01&  1.22&1.01&  1.25&1.01&  1.24&1.01\\
1760 & 1.0&1.0&  1.28&1.06&  1.33&1.06&  1.34&1.06&  1.34&1.06 \\
3410 & 1.0&1.0&  1.3&1.07&  1.33&1.07   \\\bottomrule
\end{tabular}}

\subfloat[Gauss-Const]
{\begin{tabular}{lcccccccccc}
\toprule
Clients & \multicolumn{10}{c}{Facilities}\\ \cmidrule(r){1-1}\cmidrule(l){2-11}
  & \multicolumn{2}{c}{10} & \multicolumn{2}{c}{110} & \multicolumn{2}{c}{210} & \multicolumn{2}{c}{310} & \multicolumn{2}{c}{410} \\ \cmidrule(lr){2-3}\cmidrule(lr){4-5}\cmidrule(lr){6-7} \cmidrule(lr){8-9} \cmidrule(lr){10-11}
  & Greedy & Search & GD & LS & GD & LS & GD & LS & GD & LS\\
10  & 1.0&1.0&  1.0&1.0&  1.0&1.0&  1.0&1.0& 1.0&1.0  \\
160 & 1.0&1.0&  2.74&1.64&  3.05&1.6&  3.33&1.62&  3.33&1.57  \\
310 & 1.0&1.0&  2.76&1.70&  3.07&1.66&  3.32&1.64\\\midrule
110 & 1.0&1.0&  1.0&1.0  & 1.0$^*$&1.0$^*$&  \\
3410 & 1.01&1.0&  2.74&1.65&  3.02$^*$&1.63$^*$\\\bottomrule
\end{tabular}}

\subfloat[Gauss-Exp]
{\begin{tabular}{lcccccccccc}
\toprule
Clients & \multicolumn{10}{c}{Facilities}\\ \cmidrule(r){1-1}\cmidrule(l){2-11}
  & \multicolumn{2}{c}{10} & \multicolumn{2}{c}{110} & \multicolumn{2}{c}{210} & \multicolumn{2}{c}{310} & \multicolumn{2}{c}{410} \\ \cmidrule(lr){2-3}\cmidrule(lr){4-5}\cmidrule(lr){6-7} \cmidrule(lr){8-9} \cmidrule(lr){10-11}
  & Greedy & Search & GD & LS & GD & LS & GD & LS & GD & LS\\
10 & && && 1.0$^*$&1.0$^*$&  1.0&1.0&  1.0&1.0\\
110 & 1.0&1.0&  1.34&1.16&  1.66&1.28&  1.65&1.26&  1.91&1.34\\
210 & 1.0&1.0&  1.9&1.41&  2.14&1.45&  2.31&1.46&  2.46&1.49\\
310 & 1.0&1.0&  2.23&1.48&  2.6&1.48&  2.69&1.51&  2.78&1.50\\\midrule
110 & 1.0&1.0&  1.0&1.0&  1.0&1.0&  1.0&1.0&  1.01&1.0\\
1210 & 1.0&1.0&  1.38&1.21&  1.56&1.23&  1.73&1.29&  1.77&1.29\\
2310 & 1.0&1.0&  1.94&1.38&  2.09&1.44&  2.48&1.41&  2.29&1.44\\
3410 & 1.0&1.0&  2.17&1.51&  2.48&1.48&  2.8&1.55&  \\\bottomrule
\end{tabular}}

\caption{Performance depending on instance size for the Greedy Downwards and Local Search heuristics. All values are averages over 50 instances, except for those marked by $^*$. For Gauss-Exp instances the number of clients is the mean of the exponential distribution times the number of groups. Values above the horizontal line come from instances with 10 clients per group, below the line instances have 110 clients per group.}
\label{app:tab:detailed experiments}
\end{table}

Table~\ref{app:tab:uniform means} summarizes the results of the experiments, Table~\ref{app:tab:detailed experiments} shows the performance for the different instance sizes for the Greedy Upwards and the Local Search heuristic. The performance differences in Table~\ref{app:tab:uniform means} are statistically significant with a very small two-sided $p$-value, except for the difference between Greedy Downward and Randomized Local Search on Uniform and Gauss-Const instances. In these cases the $p$-value is 0.66, respectively 0.08.

Since we use an LP relaxation as a comparison point, we do not know whether the instances where the heuristics find a worse solution are actually hard for the heuristics or whether the LP relaxation provides a much too low bound. To investigate this we had a closer look at instances where both Greedy down and Local Search perform badly. For three instances we attempted to solve the integer linear program and succeeded for two of them. In Figure~\ref{fig:hard instances} we see different solutions. For these instances at least it was indeed the case that the LP relaxation yielded a bad bound. This suggests that the heuristics work even better than the numbers in Table~\ref{app:tab:uniform means} indicate.

\begin{figure}
\centering
\subfloat[Uniform: LP value 2806.4, Greedy value 5982.39, Local Search value 3426.43, OPT 3230.19.]{\label{fig:hard uniform}
	\includegraphics[width=0.24\linewidth]{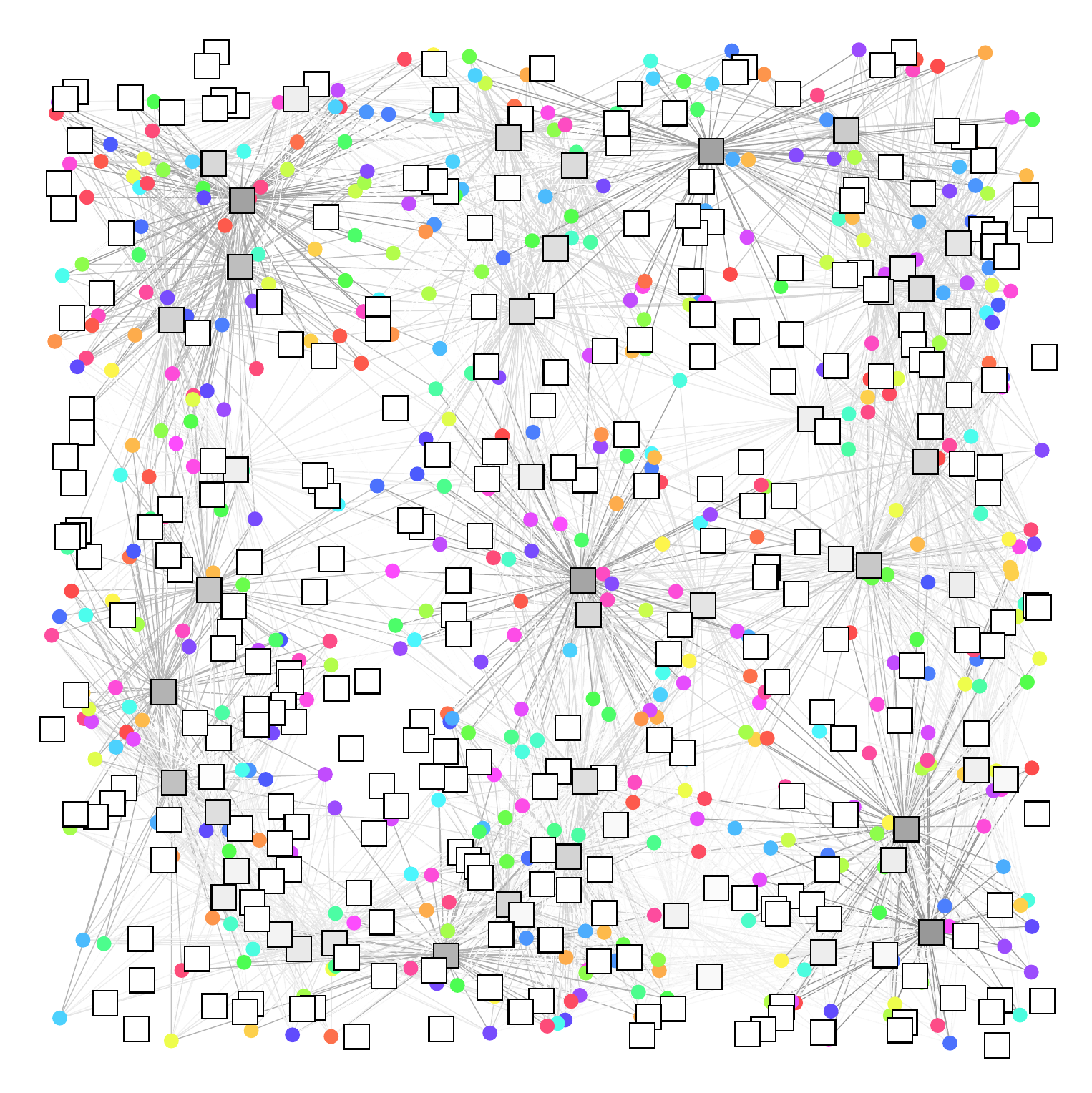}
	\includegraphics[width=0.24\linewidth]{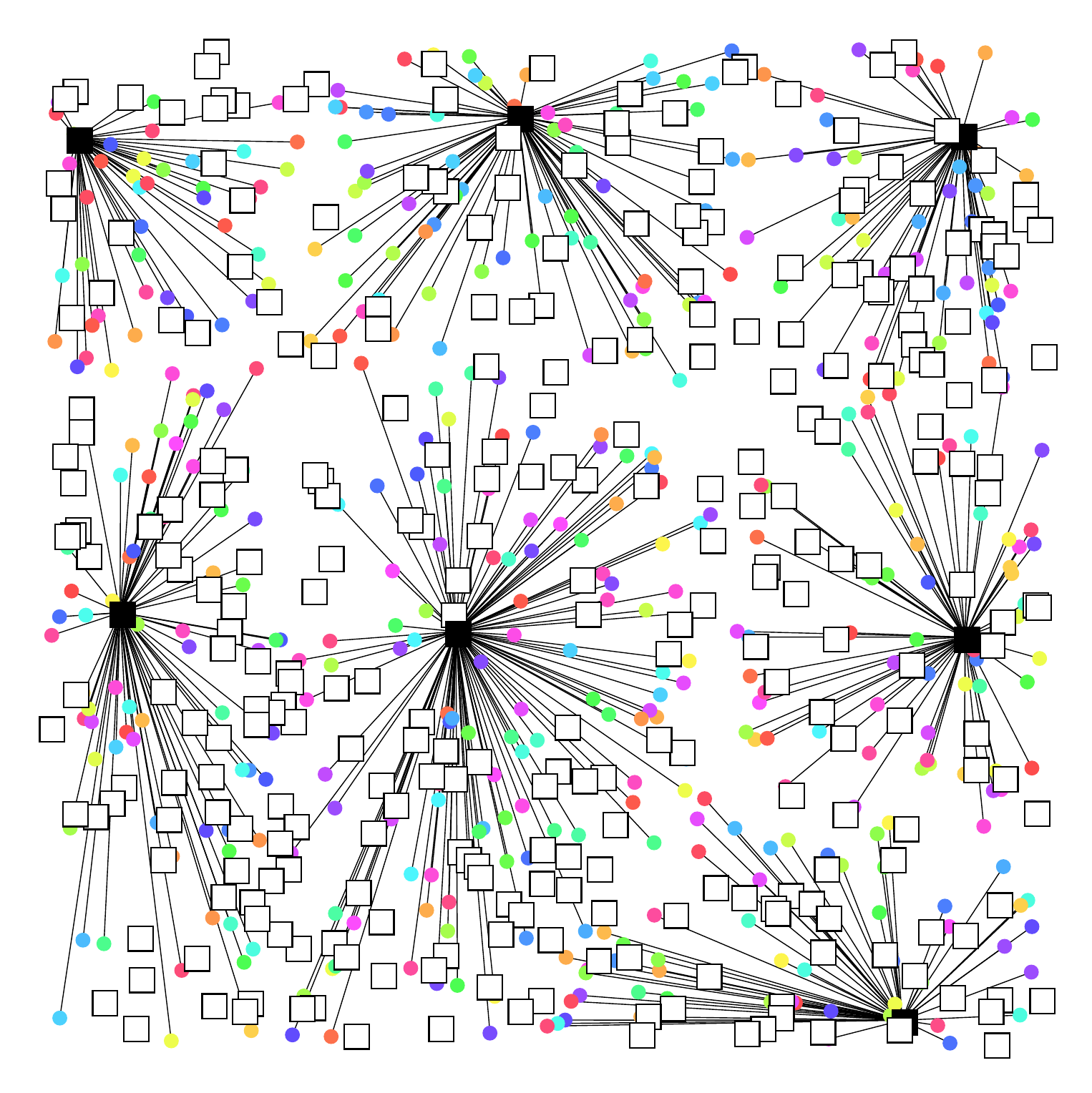}
	\includegraphics[width=0.24\linewidth]{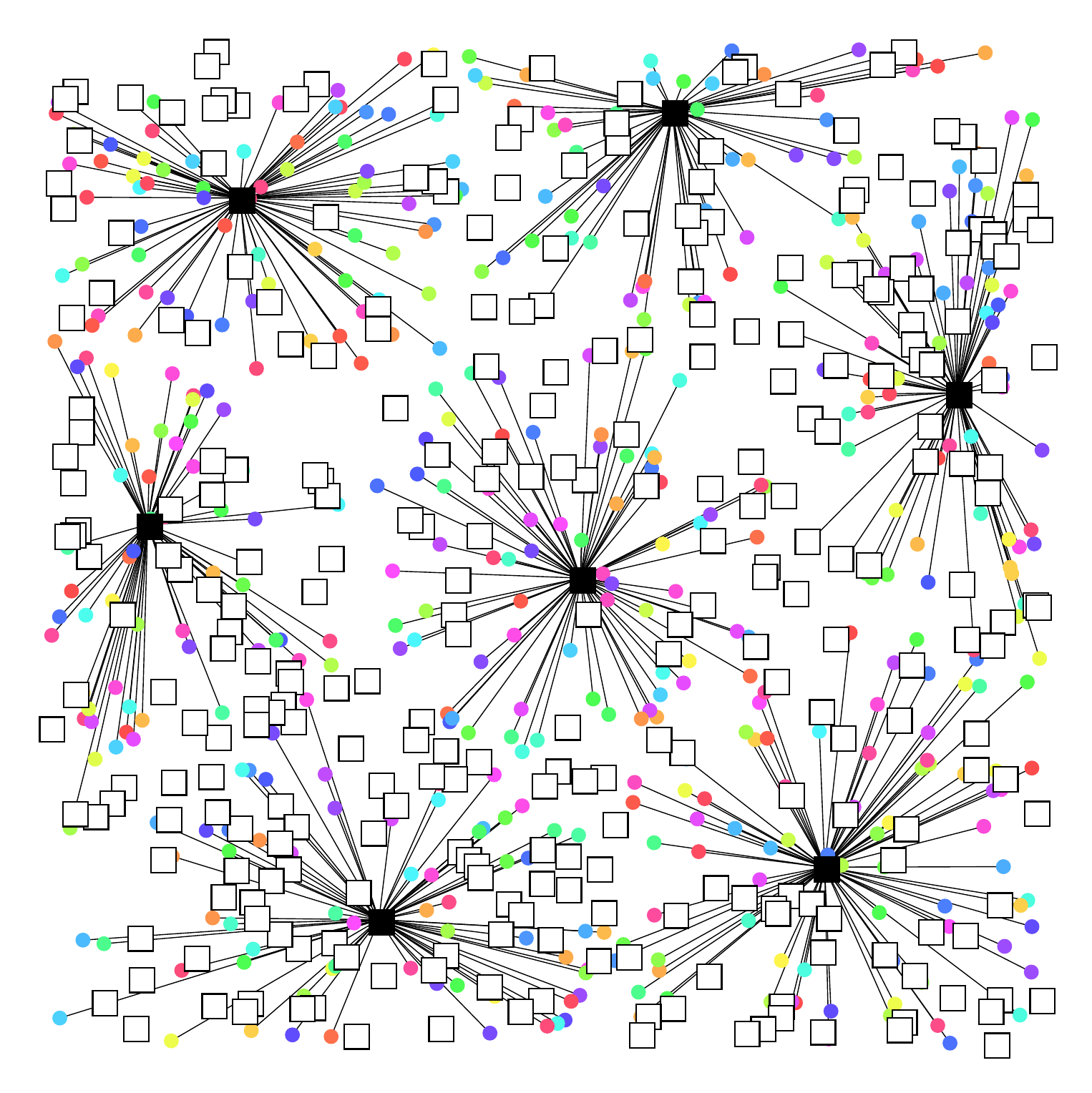}
	\includegraphics[width=0.24\linewidth]{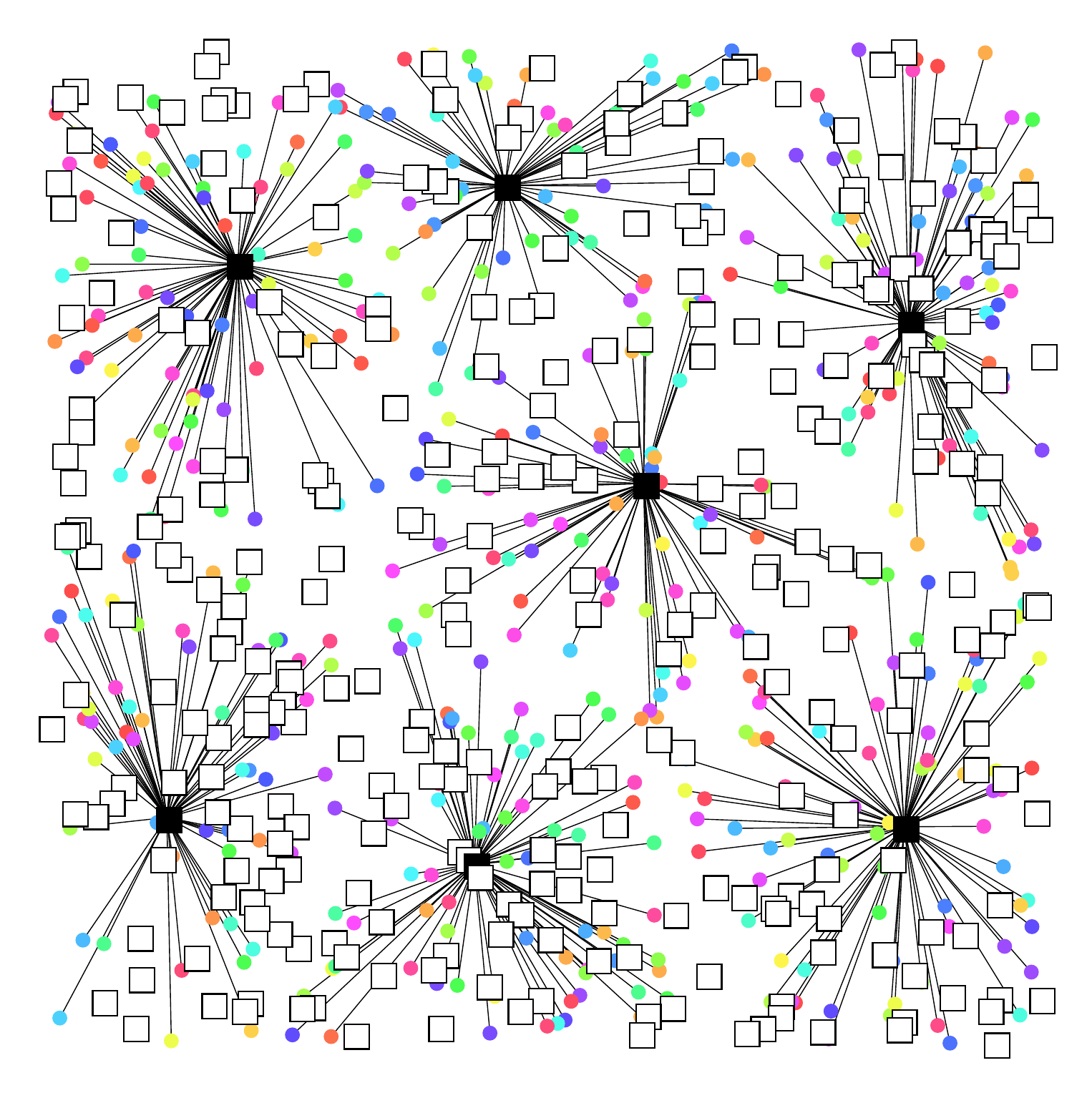}
}\\
\subfloat[Gauss-Const: LP value 1360.26, Greedy value 8307.48, Local Search value 2541.21, OPT 2505.26]{\label{fig:hard gauss const}
	\includegraphics[width=0.24\linewidth]{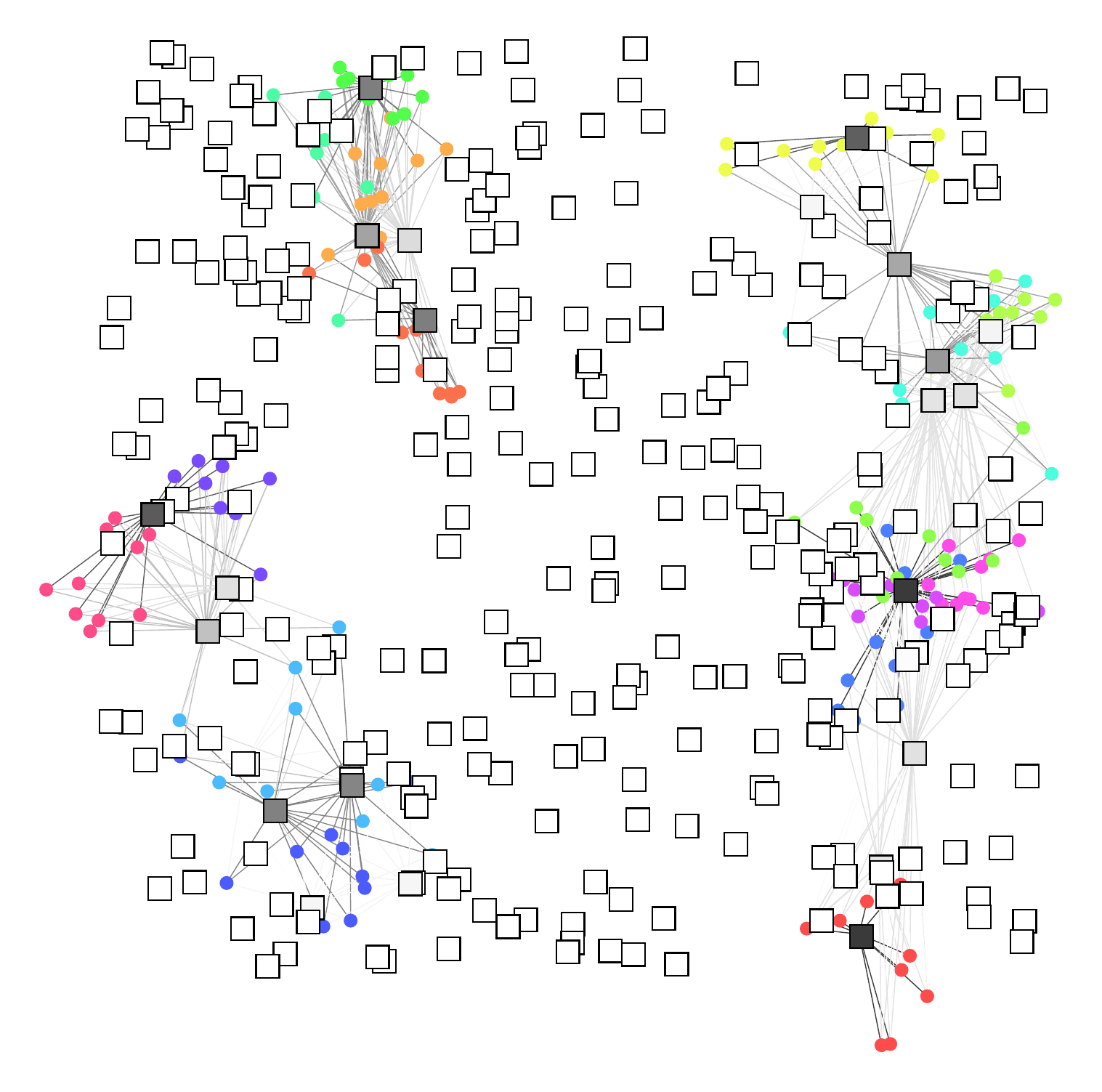}
	\includegraphics[width=0.24\linewidth]{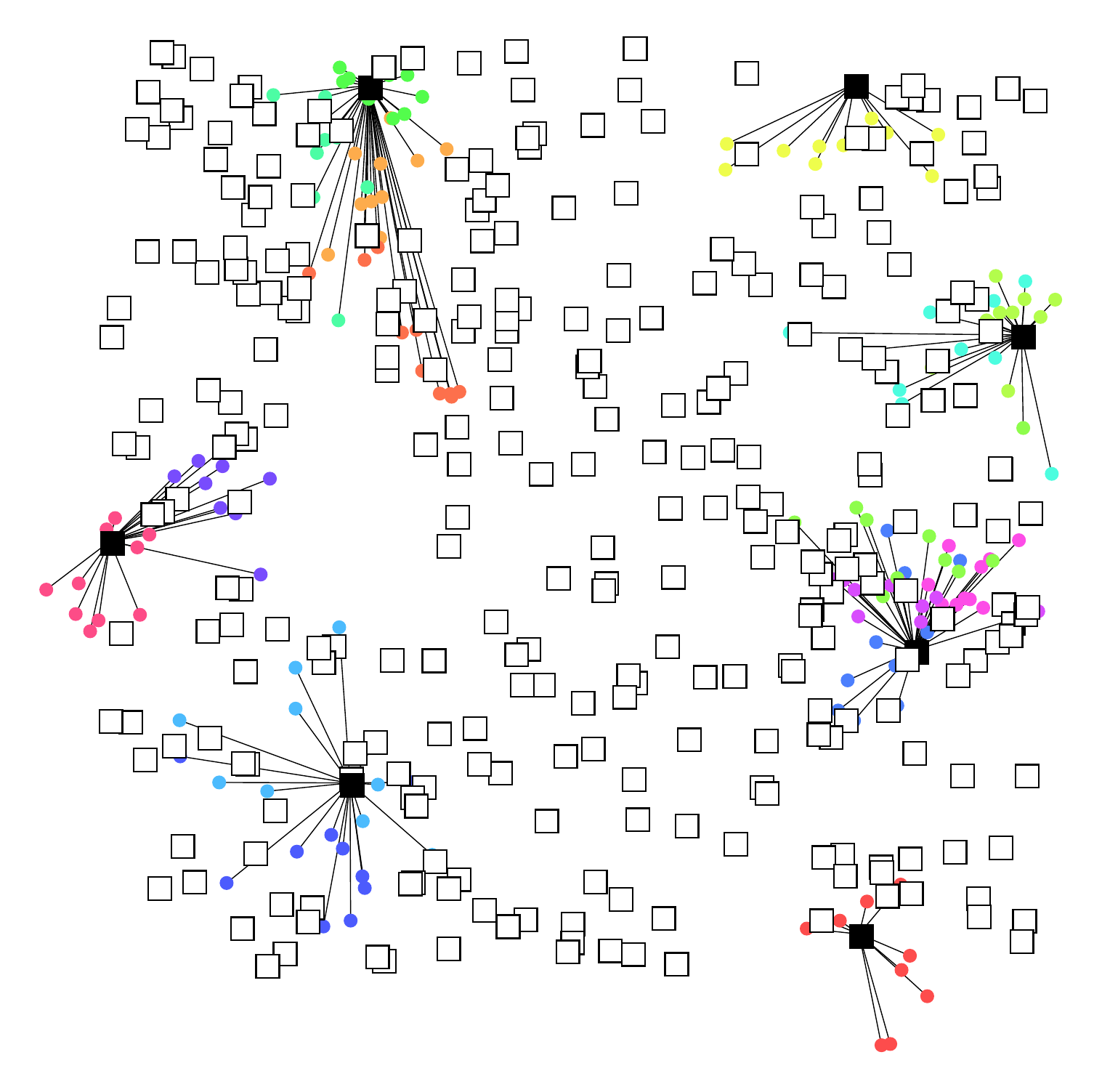}
	\includegraphics[width=0.24\linewidth]{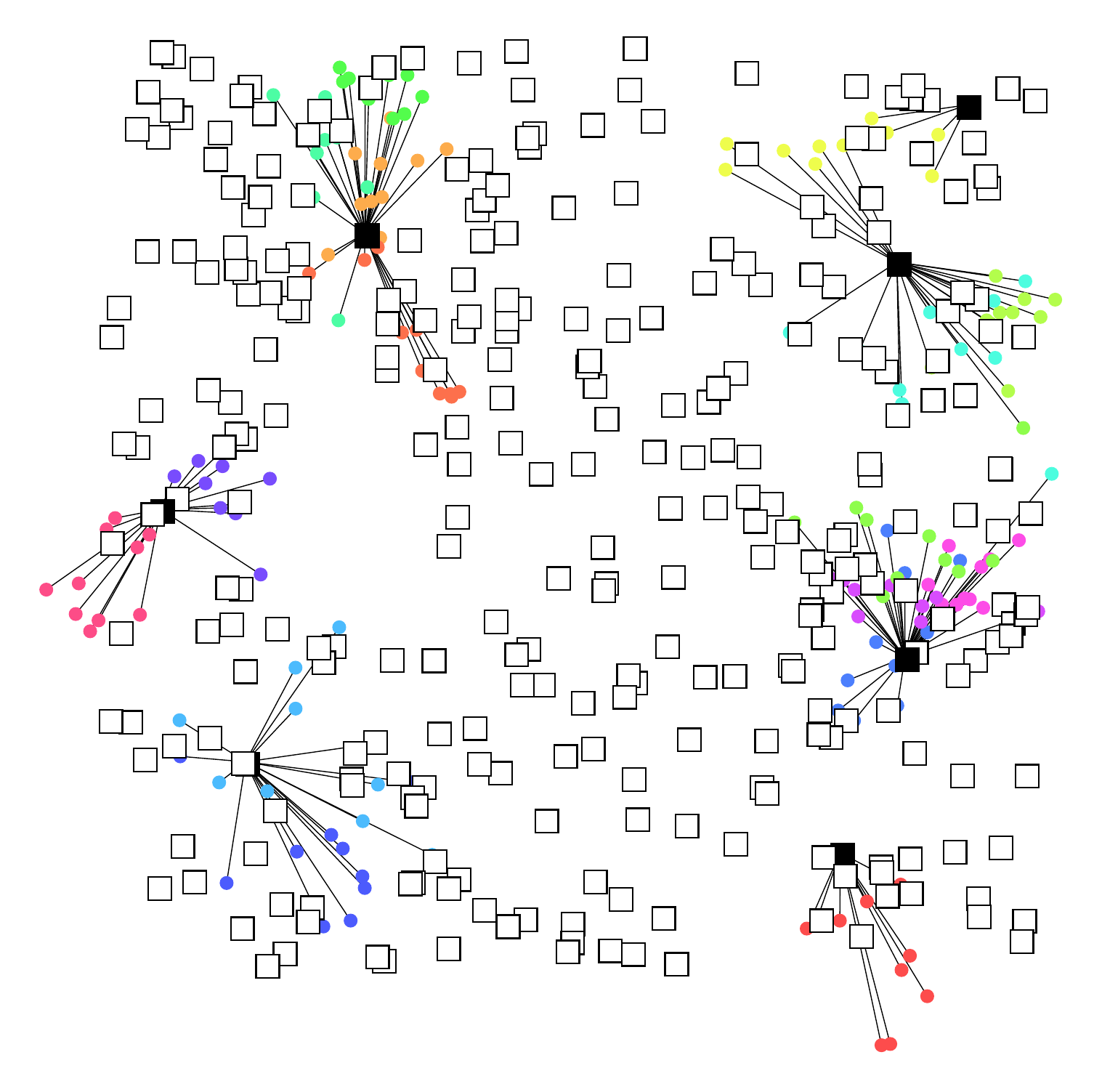}
	\includegraphics[width=0.24\linewidth]{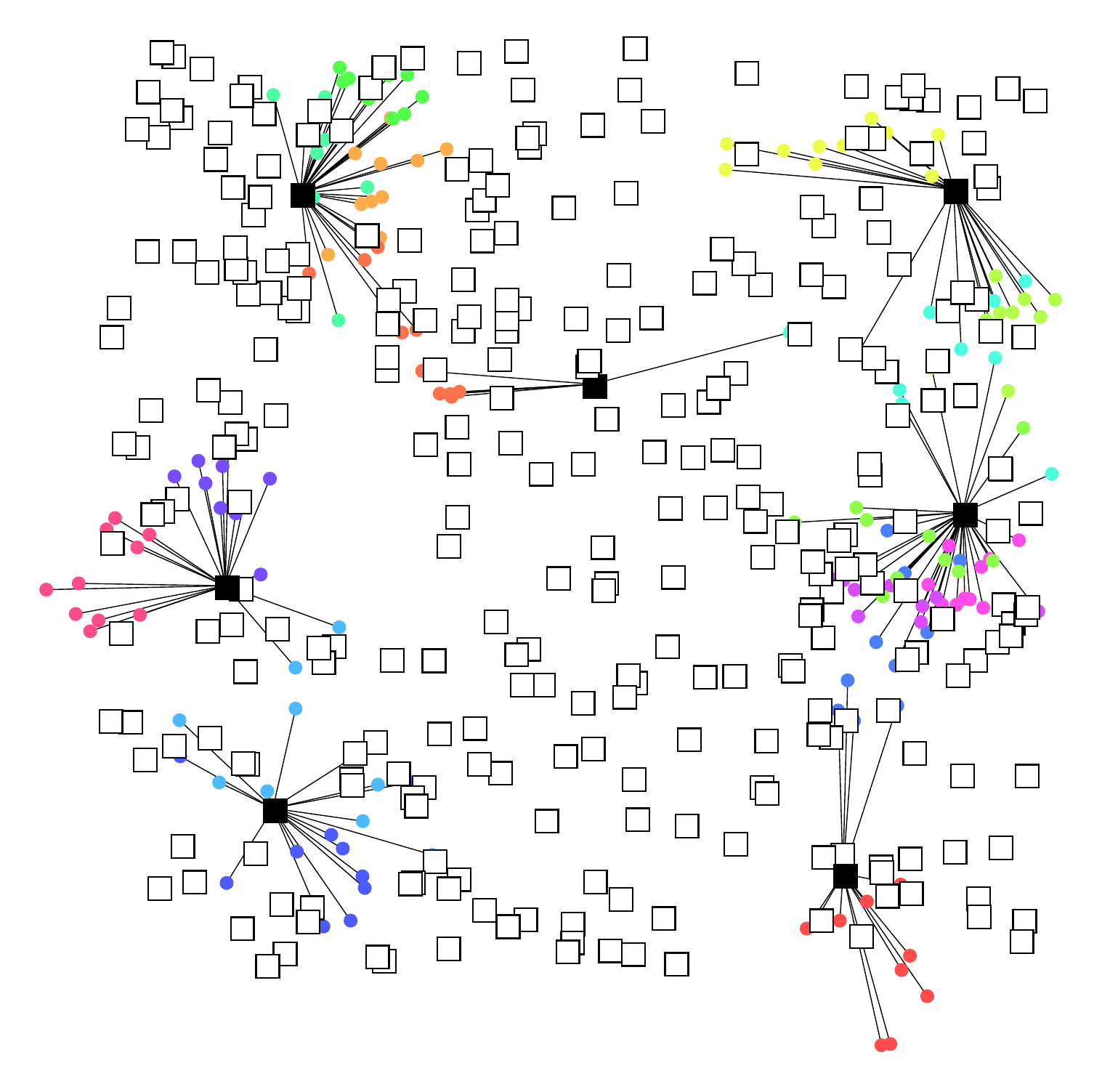}
}\\
\subfloat[Gauss-Exp: LP value 2362.06, Greedy value 10624.4, Local Search value 4354.54, 4192.31 $\le$ OPT $\le$ 4354.54]{\label{fig:hard gauss exp}
	\includegraphics[width=0.24\linewidth]{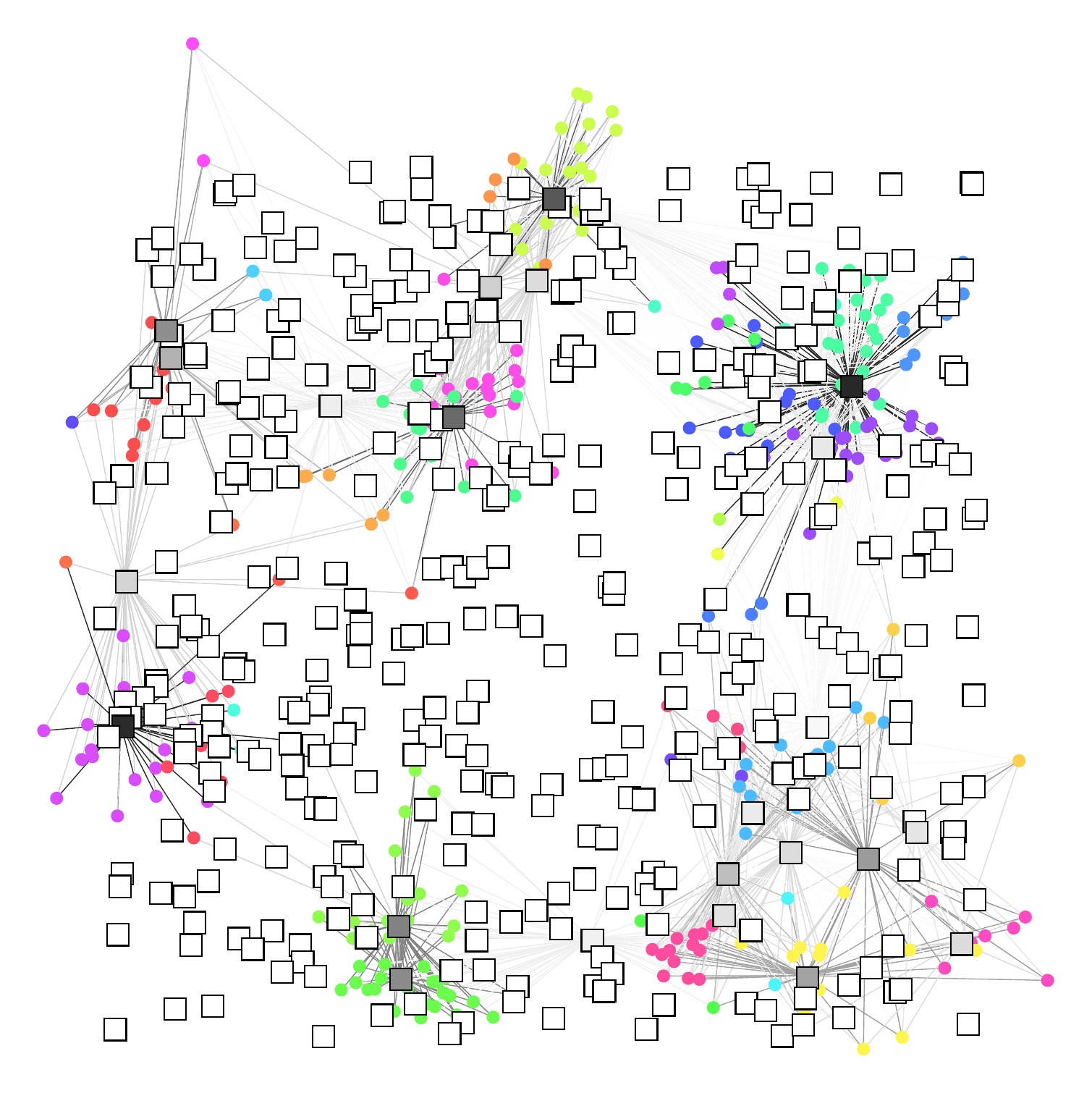}
	\includegraphics[width=0.24\linewidth]{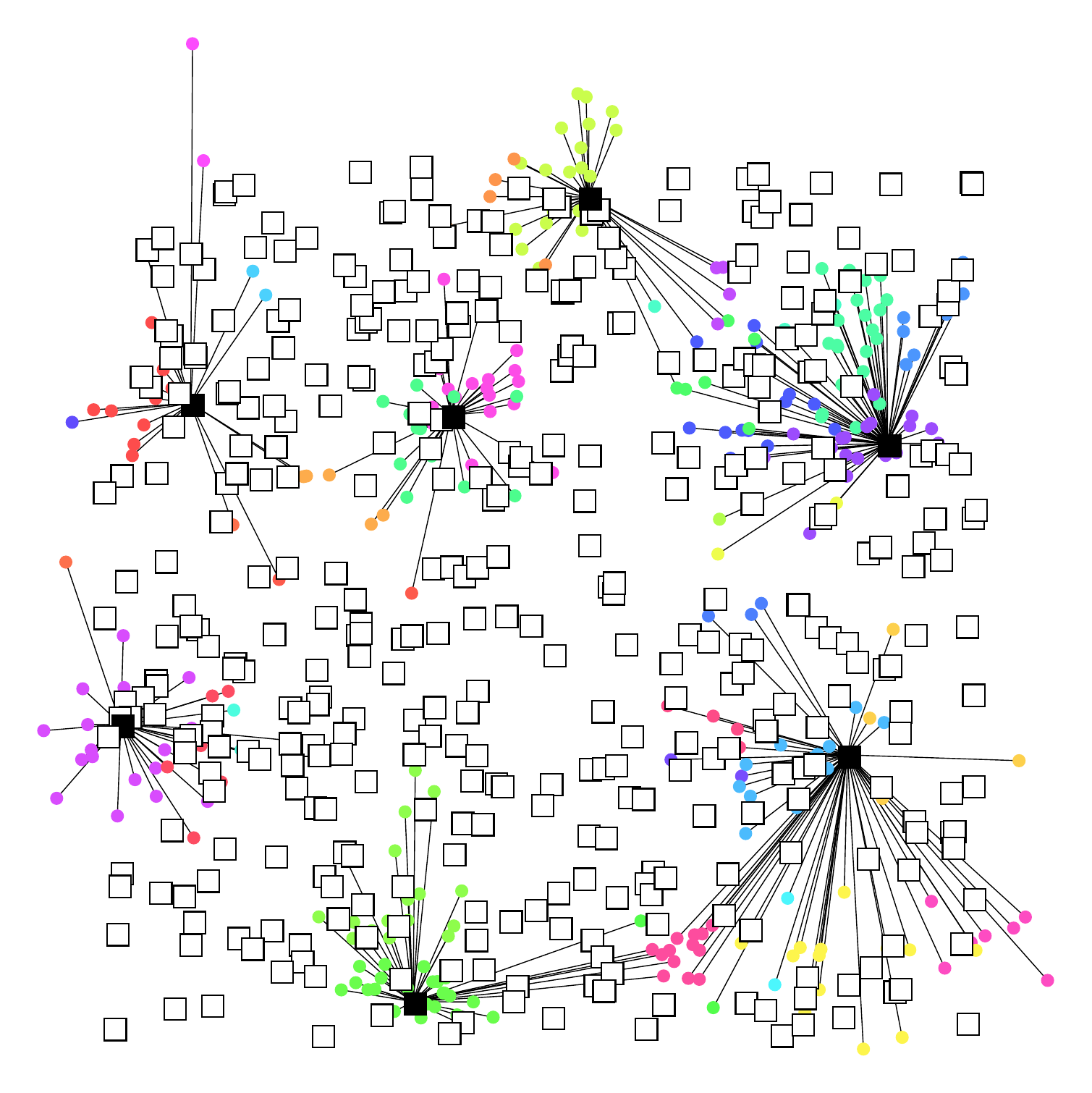}
	\includegraphics[width=0.24\linewidth]{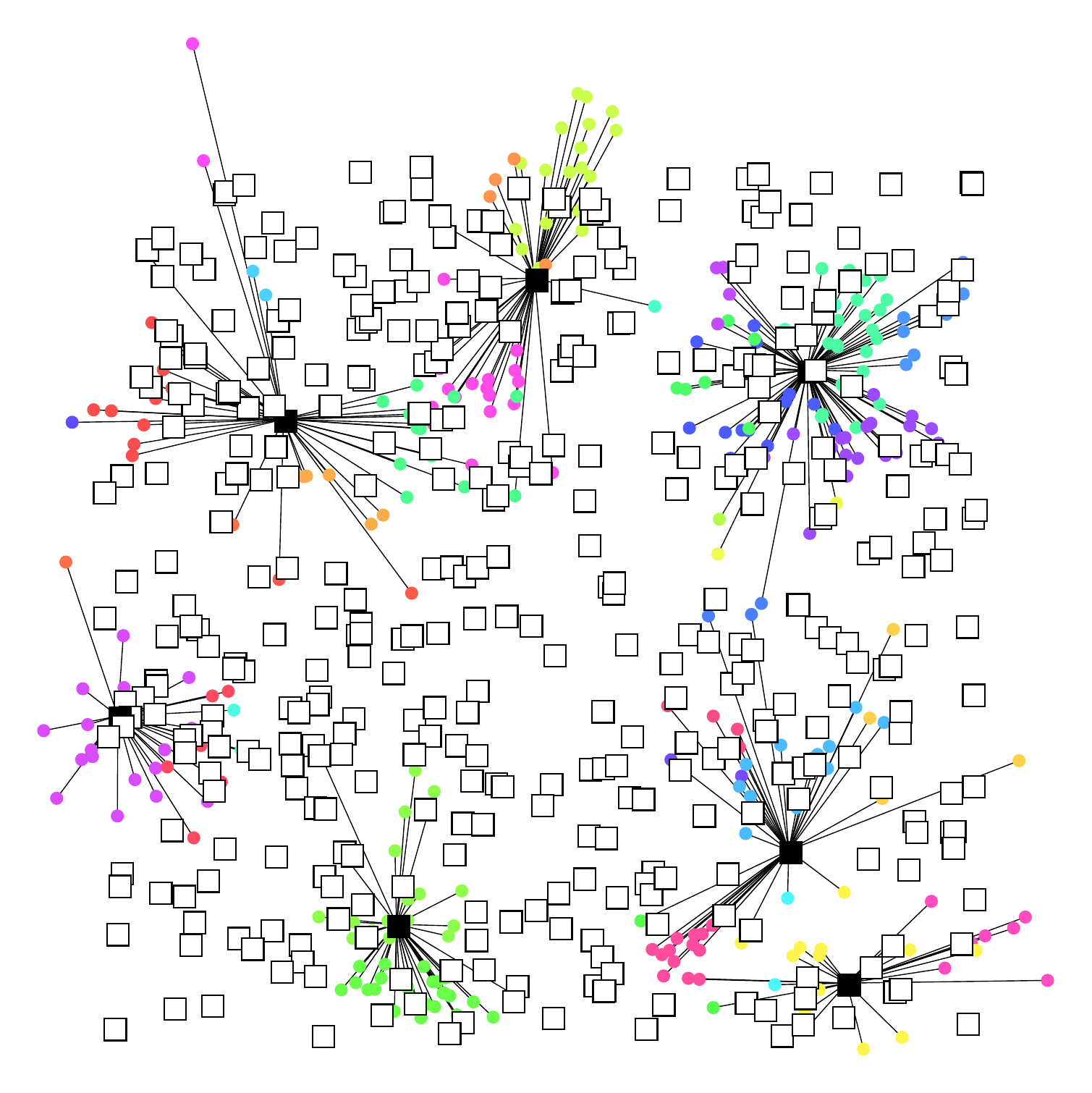}\hspace{0.24\linewidth}
}
\caption{Solutions of the different algorithms on particularly hard instances. From left to right, the LP solution, the Greedy downwards solution, the Local Search solution and the ILP solution. Darkness of facilities indicates "how open" they are in the LP relaxation. In \ref{fig:hard gauss exp} the ILP solver was still running at the time of submission, after having consumed 177 days of CPU time and 46GB of memory.}
\label{fig:hard instances}
\end{figure}

\subsection{Conclusion}

Note that all heuristics perform very well on the instances we tried. In accordance with our theoretical results, increasing the number of groups makes the instances harder, more so that increasing the number of facilities or the number of clients.

As expected instances where the \emph{robust} nature of the Robust $k$-Median problem are not as important because groups are distributed uniformly are easier than the more realistic instances where groups form clusters. For the two better heuristics, Greedy Downwards and Local Search, also perform better on instances with uneven group sizes. Here too, one can speculate that few groups dominate the problem, and finding a solution that minimizes maximum costs becomes easier.

The good performance of these simple heuristics indicate that although the Robust $k$-Median problem is hard to approximate in the worst case, a heuristic treatment can effectively find a very good approximation.

\end{document}